\renewcommand\title[1]{\gdef\@title{\reset@font\Large\bfseries #1}}
\renewcommand\section{\@startsection {section}{1}{\z@}%
                                   {-3.5ex \@plus -1ex \@minus -.2ex}%
                                   {2.3ex \@plus.2ex}%
                                   {\normalfont\large\bfseries}}
\renewcommand\subsection{\@startsection{subsection}{2}{\z@}%
                                     {-3ex\@plus -1ex \@minus -.2ex}%
                                     {1.5ex \@plus .2ex}%
                                     {\normalfont\normalsize\bfseries}}
\renewcommand\subsubsection{\@startsection{subsubsection}{3}{\z@}%
                                     {-2.5ex\@plus -1ex \@minus -.2ex}%
                                     {1.5ex \@plus .2ex}%
                                     {\normalfont\normalsize\bfseries}}
\def\@runningauthor{}\newcommand{\runningauthor}[1]{\def\runningauthor{#1}}
\def\@runningtitle{}\newcommand{\runningtitle}[1]{\def\runningtitle{#1}}
\renewcommand{\ps@plain}{%
\renewcommand{\@evenhead}{\footnotesize\scshape \hfill\runningauthor\hfill}
\renewcommand{\@oddhead}{\footnotesize\scshape \hfill\runningtitle\hfill}}
\g@addto@macro\bfseries{\boldmath}
\theoremstyle{plain}
\newtheorem{theorem}{Theorem}[section]
\newtheorem{lemma}[theorem]{Lemma}
\newtheorem{corollary}[theorem]{Corollary}
\newtheorem{proposition}[theorem]{Proposition}
\theoremstyle{definition}
\newtheorem{definition}[theorem]{Definition}
\newtheorem{example}[theorem]{Example}
\newtheorem{conjecture}[theorem]{Conjecture}
\theoremstyle{remark}
\newtheorem{remark}[theorem]{Remark}
\title{Partially APN  Boolean functions and classes of functions that are not APN infinitely often}
\runningtitle{Partially APN  Boolean functions}
\author[1]{Lilya Budaghyan}
\author[1]{Nikolay S. Kaleyski}
\author[2]{Soonhak Kwon}
\author[3]{Constanza Riera}
\author[4]{Pantelimon St\u anic\u a}
\affil[1]{\small Department of Informatics, University of Bergen, \protect\\ 5020 Bergen, Norway;
\{\tt Lilya.Budaghyan, Nikolay.Kaleyski\}@uib.no}
\affil[2]{\small Department of Mathematics, Sungkyunkwan University,\protect\\ Suwon 16419,  Republic of Korea; {\tt shkwon@skku.edu}}
\affil[3]{Department of Computing, Mathematics, and Physics,\protect\\ Western Norway University of Applied Sciences, \protect\\ 5020 Bergen, Norway; {\tt csr@hvl.no} }
\affil[4]{Department of Applied Mathematics, Naval Postgraduate School,\protect\\ Monterey, CA 93943-5212, U.S.A.; {\tt pstanica@nps.edu} }
\runningauthor{L. Budaghyan, N. S. Kaleyski, S. Kwon, C. Riera, P. St\u anic\u a}
\date{}
\def \F {{\mathbb F}}
\def\cW{{\mathcal H}}
\def\cW{{\mathcal W}}
\def\00{{\bf 0}}
\def\11{{\bf 1}}
\def\+{\oplus}
\def \F {{\mathbb F}}
\def \Tr {{\rm Tr}_1^n}
\newcommand{\cardinality}[1]{\# #1}
\newcommand{\bwht}[2]{\mathcal{W}_{#1}(#2)}
\newcommand{\vwht}[3]{\mathcal{W}_{#1}(#2,#3)}
\newcommand{\vwhtp}[4]{\mathcal{W}^{#4}_{#1}(#2,#3)}
\begin{document}

\maketitle

\thispagestyle{empty}

\begin{abstract}
In this paper we define a notion of partial APNness  and find various characterizations and constructions of classes of functions satisfying this condition. We connect this notion to the known conjecture that APN functions modified at a point cannot remain APN. In the second part of the paper, we find  conditions for some transformations not to be partially APN, and in the process, we find classes of functions that are never APN for infinitely many extensions of the prime field $\F_2$, extending some earlier results of Leander and Rodier.
\end{abstract}

%%%%%%%%%%%%%%%%%%%%%%%%%%%%%%%%%%%%%%%%%%%%%%%%%%%%%%%

\noindent
\textbf{Keywords:} Boolean function, almost perfect nonlinear (APN), partial APN, Walsh-Hadamard coefficients.\\
\textbf{2010 MSC:} 94A60, 94C10, 06B30

%%%%%%%%%%%%%%%%%%%%%%%%%%%%%%%%%%%%%%%%%%%%%%%%%%%%%%%

\section{Introduction}

The objects of this study are Boolean functions and some of their differential properties.  We will introduce here only some needed notions, and the reader can consult~\cite{Bud14,CH1,CH2,CS17} for more on Boolean functions.

Let $n$ be a positive integer and $\F_{2^n}$ denote the  finite field with $2^n$ elements, and $\F_{2^n}^*=\F_{2^n}\setminus\{0\}$. Further, let $\F_2^m$ denote the $m$-dimensional vector space over $\F_2$.
%We often work with $\F_2^n$ or (the finite field) $\F_{2^n}$, whenever convenient. By `$+$' and `$-$' we respectively denote addition  and subtraction modulo $2^n$,
%whereas `$\oplus$' denotes the addition over $\F_2^n$.
We call a function from $\F_{2^n}$ to $\F_2$  a {\em Boolean function} on $n$ variables.
%NSK: We dot use this right now, but it may be needed if we uncomment some of the other stuff about autocorrelation.
%and denote the set of all such functions by $\cB_n$.
%The ({\em Hamming}) {\em weight} of $x =(x_1,\ldots,x_n)\in \F_{2^n}$ is
%$wt(x) =\sum_{i=1}^{n}x_i$
%(the Hamming weight of a function is the weight of its truth table, that is, its output vector).
%For $u=(u_1,\ldots,u_n),v=(v_1,\ldots,v_n)\in \BBF_2^n$,
%we define the following partial order on $\BBF_2^n$:
%\[
%u\preceq v\ \text{if and only if } u_i\leq v_i\text{, for every }i.
%\]
For $f:\F_{2^n}\to \F_2$ we define the {\it Walsh-Hadamard transform} to be the integer-valued function
%\[
%\mathcal{W}_f(u) = \sum_{x\in \F_2^n}(-1)^{f(x)}(-1)^{\langle u,x\rangle},
%\]
%and if $f:\F_{2^n}\to\F_2$, we let
$\displaystyle
\bwht{f}{u}  = \sum_{x\in \F_{2^n}}(-1)^{f(x)+\Tr(u x)}, \ u \in \mathbb{F}_{2^n},
$
 where $\Tr:\F_{2^n}\to \F_2$ is the absolute trace function, given by $\Tr(x)=\sum_{i=0}^{n-1} x^{2^i}$.
This transform satisfies Parseval's relation
$\displaystyle
 % \label{eqParseval}
\sum_{a \in \F_{2^n}} \bwht{f}{a}^2 = 2^{2n}.
$
%which can be easily shown by using the fact that the trace function is balanced over $\mathbb{F}_{2^n}$.

%The sum $$\cC_{f,g}(z)=\sum_{x \in \F_{2^n}} (-1)^{ f(x)  +g(x + z)}$$
%is  the {\em crosscorrelation} of $f$ and
%$g$ at $z \in \F_{2^n}$.
%The {\em autocorrelation} of $f \in \cB_n$ at $u \in \F_{2^n}$
%is $\cC_{f,f}(u)$ above, which we denote by $\cC_f(u)$.
%Recall that if $f,g\in\cB_n$, then
%\begin{equation}
%\label{eq3}
%\begin{split}
%\sum_{u \in \F_{2^n}}\cC_{f,g}(u)(-1)^{\Tr(u  x)} = 2^n \cW_f(x)\cW_g(x), \\
%\cC_{f,g}(u) = \sum_{x \in \F_{2^n}}\cW_f(x)\cW_g(x) (-1)^{\Tr(u  x)}.
%\end{split}
%\end{equation}
%Taking the particular case $f = g$ we obtain
%\begin{equation}
%\label{eq4}
%\cC_f(u) = \sum_{x \in \F_{2^n}}\cW_f(x)^2 (-1)^{\Tr(u  x)}.
%\end{equation}

%NSK: We never use bent functions.
%A function $f:\F_{2^n}\rightarrow\F_2$ is called {\em bent} if $\absval{\bwht{f}{u}} = 2^{n/2}$ for all $u\in \F_{2^n}$, and so, bent functions exist only for even $n$.
%Further, recall that $f\in\mathcal{B}_n$, $n$ odd, is called {\it semibent} if $|\mathcal{W}_f(u)| \in \{0,2^{(n+1)/2}\}$ for all $u\in \F_{2^n}$.
%Recall that $f\in\mathcal{B}_n$ is called {\em plateaued} if for all $u\in \F_{2^n}$ we have
%$|\mathcal{W}_f(u)| \in \{0,2^{(n+s)/2}\}$ for a fixed integer $s$ depending on $f$; we also call $f$ then $s$-{\em plateaued} or $s$-{\em nonlinear} (see \cite{BP}). If $s=1$ ($n$ must then be odd), or $s=2$ ($n$ must then be even), we call $f$ {\em semibent}.
%With this notation a semibent function is an $s$-plateaued Boolean function with smallest possible $s>0$.
Given a Boolean function $f$, the derivative of $f$ with respect to~$a \in \F_{2^n}$ is the Boolean function
$
 D_{a}f(x) =  f(x + a)+ f(x), \mbox{ for  all }  x \in \F_{2^n}.
$

For positive integers $n$ and $m$, any map $F:\F_2^n\to\F_2^m$ is called a vectorial Boolean function, or $(n,m)$-function. When $m=n$, $F$ can be uniquely represented as a univariate polynomial over $\F_{2^n}$ (using the natural identification of the finite field with the vector space) of the form
$
F(x)=\sum_{i=0}^{2^n-1} a_i x^i,\ a_i\in\F_{2^n}.
$
The algebraic degree of $F$ is then the largest Hamming weight of the exponents $i$ with $a_i\neq 0$. For an $(n,m)$-function $F$, we define the Walsh transform $\vwht{F}{a}{b}$ to be the Walsh-Hadamard transform of its component function ${\rm Tr}_1^m(bF(x))$ at $a$, that is,
\[
  \vwht{F}{a}{b}=\sum_{x\in\F_{2^n}} (-1)^{{\rm Tr}_1^m(bF(x))+\Tr(ax)}, \text{ where $a\in \F_{2^n}, b\in \F_{2^m}$.}
\]

%The cardinality of a set $S$ is denoted by $\cardinality{S}$. 
For an $(n,n)$-function $F$, and $a,b\in\F_{2^n}$, we let $\Delta_F(a,b)=\cardinality{\{x\in\F_{2^n} : F(x+a)+F(x)=b\}}$, where $\cardinality{S}$ denotes the cardinality of a set $S$. We call the quantity
$\Delta_F=\max\{\Delta_F(a,b)\,:\, a,b\in \F_{2^n}, a\neq 0 \}$ the {\em differential uniformity} of $F$. If $\Delta_F\leq \delta$, then we say that $F$ is differentially $\delta$-uniform. If $\delta=2$, then $F$ is called an {\em almost perfect nonlinear} ({\em APN}) function.
There are many useful characterizations and properties of APN functions, some of which are stated below (see~\cite{CB18,CH2,CV95,Rod09}).
\begin{lemma}
\label{APN:char1}
Let $F$ be an $(n,n)$-function. The following hold$:$
\begin{enumerate}[$(i)$]
\item we have
$\displaystyle
\sum_{a,b\in \F_{2^n}} \vwhtp{F}{a}{b}{4} \geq 2^{3n+1}(3\cdot 2^{n-1}-1),
$
with equality if and only if  $F$ is APN$;$
\item if $F(0)=0$ and $F$ is APN, then
$\displaystyle
\sum_{a,b\in \F_{2^n}} \vwhtp{F}{a}{b}{3}=2^{2n+1}(3\cdot 2^{n-1}-1);
$
\item  $($Rodier Condition$)$ $F$ is APN if and only if all the points $x,y,z$ satisfying
$
F(x)+F(y)+F(z)+F(x+y+z)=0,
$
belong to    the curve $(x+y)(x+z)(y+z)=0$.
\end{enumerate}
\end{lemma}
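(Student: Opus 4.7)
The plan is to start with $(iii)$, since it recasts APN-ness into a form used in the power-sum identities of $(i)$ and $(ii)$.

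For $(iii)$: given a triple $(x,y,z)$ satisfying $F(x)+F(y)+F(z)+F(x+y+z)=0$, set $a=x+y$ and $b=F(x)+F(y)=F(x)+F(x+a)$. Substituting $x+y+z=z+a$ into the hypothesis yields $F(z)+F(z+a)=b$ as well, so both $x$ and $z$ solve $F(u)+F(u+a)=b$. If $a=0$ then $x=y$; otherwise, APN-ness forces the solution set to be the pair $\{x,x+a\}=\{x,y\}$, so $z\in\{x,y\}$, and thus $(x+y)(x+z)(y+z)=0$. Running the argument in reverse -- take $a\neq 0$, $b$, and two solutions $x,z$ of $F(u)+F(u+a)=b$, set $y=x+a$ -- proves the converse.

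For $(i)$: I would expand the quartic Walsh sum, swap the order of summation, and apply the orthogonality relation $\sum_{t\in\F_{2^n}}(-1)^{\Tr(ts)}=2^n$ if $s=0$ and $0$ otherwise, separately in $a$ and $b$, to obtain
\[
\sum_{a,b\in\F_{2^n}}\vwhtp{F}{a}{b}{4}=2^{2n}N_4,
\]
where $N_4$ counts quadruples $(x_1,x_2,x_3,x_4)$ with $x_1+x_2+x_3+x_4=0$ and $F(x_1)+F(x_2)+F(x_3)+F(x_4)=0$. Writing $x_4=x_1+x_2+x_3$ turns the $F$-condition into the Rodier equation. The always-present ``trivial'' solutions are those with $x_1=x_2$, $x_1=x_3$, or $x_2=x_3$ (each of which automatically satisfies both constraints). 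Inclusion-exclusion -- each event has $2^{2n}$ elements while all pairwise and the triple intersection have $2^n$ elements -- gives $3\cdot 2^{2n}-3\cdot 2^n+2^n=2^{n+1}(3\cdot 2^{n-1}-1)$ trivial quadruples, hence $\sum_{a,b}\vwhtp{F}{a}{b}{4}\geq 2^{3n+1}(3\cdot 2^{n-1}-1)$. Part $(iii)$ guarantees equality exactly when no non-trivial solutions exist, i.e.\ when $F$ is APN.

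For $(ii)$: the identical orthogonality reduction yields $\sum_{a,b}\vwhtp{F}{a}{b}{3}=2^{2n}N_3$, where $N_3$ counts triples with $x_1+x_2+x_3=0$ and $F(x_1)+F(x_2)+F(x_3)=0$. Setting $x_3=x_1+x_2$, the condition becomes $F(x_1)+F(x_2)+F(x_1+x_2)=0$. Since $F(0)=0$, the quadruple $(x_1,x_2,x_1+x_2,0)$ satisfies both the zero-sum and Rodier equations, so $(iii)$ forces $(x_1+x_2)\cdot x_2\cdot x_1=0$, i.e.\ $x_1=0$, $x_2=0$, or $x_1=x_2$, and each of these cases is automatically a solution thanks to $F(0)=0$. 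Inclusion-exclusion on these three events gives $3\cdot 2^n-3+1=2(3\cdot 2^{n-1}-1)$ triples, yielding $\sum_{a,b}\vwhtp{F}{a}{b}{3}=2^{2n+1}(3\cdot 2^{n-1}-1)$ as claimed.

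The main work should be the bookkeeping in the two inclusion-exclusion counts and the character-sum reductions; the conceptual heart is $(iii)$, after which $(i)$ and $(ii)$ reduce to counting zero-sum tuples. The one subtlety I anticipate in $(ii)$ is that the hypothesis $F(0)=0$ is genuinely needed, both to identify the ``trivial'' triples and to kill the $F(0)$ term that would otherwise appear in the auxiliary Rodier equation.
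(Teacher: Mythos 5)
Your proof is correct. Note that the paper itself gives no proof of this lemma: it is quoted from the literature (the fourth-moment characterization in $(i)$ is the Chabaud--Vaudenay bound, and $(iii)$ is Rodier's condition, see the citations accompanying the lemma), and your argument is essentially the standard one from those sources --- character orthogonality in $a$ and $b$ separately reduces each Walsh moment to $2^{2n}$ times a count of zero-sum tuples satisfying the Rodier equation, inclusion--exclusion counts the trivial tuples ($3\cdot 2^{2n}-2^{n+1}$ quadruples in $(i)$, $3\cdot 2^n-2$ triples in $(ii)$, both exactly right), and $(iii)$ is the dictionary between ``$F(u)+F(u+a)=b$ has at most two solutions'' and the curve condition, which then settles the equality case of $(i)$ and the hypothesis use of $F(0)=0$ in $(ii)$. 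The only point worth tightening is the converse direction of $(iii)$: arguing contrapositively from ``$F$ is not APN'' you get some $a\neq 0$, $b$ with at least three solutions of $F(u)+F(u+a)=b$, and you must pick the second solution $z\notin\{x,x+a\}$ (possible precisely because there are more than two solutions, as solutions come in pairs $\{u,u+a\}$) so that $(x+y)(x+z)(y+z)\neq 0$; as phrased, ``two solutions $x,z$'' could be the pair $\{x,x+a\}$ itself, which stays on the curve. With that one-line fix the argument is complete and self-contained.
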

%NSK: No need to mention since we do not use plateaued functions.
%Note that condition (ii) is both sufficient and necessary for $F$ to be APN if $F$ is plateaued \cite{carlet2015boolean}.

%Note that the Rodier Condition is straightforward to obtain from the definition of an APN function if we denote $x+y=a$. Indeed, then the condition becomes: for any $a\ne0$ the equality $F(z+a)+F(z)=F(x+a)+F(x)$ takes place if and only if either $x+a=z$ or $x=z$.

We next introduce the notion of a partial APN function.
\begin{definition}
Let $x_0\in\F_{2^n}$. We call an $(n,n)$-function $F$ a ({\em partial}) {\em $x_0$-APN function}, or simply $x_0$-APN function, if all the points $u,v$ satisfying
$
F(x_0)+F(u)+F(v)+F(x_0+u+v)=0,
$
belong to the curve $(x_0+u)(x_0+v)(u+v)=0$. 
\end{definition}

Alternatively, we can say that a function $F$ is $x_0$-APN if for any $a \ne 0$ the equation $F(x+a) + F(x) = F(x_0 + a) + F(x_0)$ has only two solutions. Certainly, an APN function is an $x_0$-APN for any point $x_0$.

 A function $F$ is called {\em weakly APN} if for any $a\ne0$ the function $F(x+a)+F(x)$ takes at least $2^{n-2}+1$ different values (see \cite{WeaklyAPN}).  Note that the notion of partial APN function  differs from the notion of weakly APN function.
 For example, it can be checked that $F(x)=x^{2^n-2}$ over $\F_{2^n}$ with $n$ even is weakly APN but not $x_0$-APN, for $x_0\in\F_{2^n}$. On the other hand, $F(x) = x^7$ over $\F_{2^{11}}$ is $0$-APN but not weakly APN.

Our proposal for the partial APN concept comes from a study of the conjecture in~\cite{CB18}, which claims that for  $n\ge3$ an APN function modified at a point cannot remain APN. While the start of this work has some initial study overlap with~\cite{CB18}, our ultimate goal is to investigate the partial APN concept.

Our paper is organized as follows. In Section~\ref{sec2} we introduce the one point modification of an $(n,n)$-function and investigate its Walsh coefficients' third and fourth moments  as compared to the original function. We further give a (local-global principle) characterization for the APNess of the modified version of an APN function, which was the original starting point of this investigation. A conjecture is proposed here, slightly strengthening the original conjecture of~\cite{CB18}.  Section~\ref{sec3} contains a standalone characterization of the partial APN concept in terms of the third moments. In  Section~\ref{sec4} we continue with some constructions and characterization of the pAPN property for monomial functions (in particular, we show that for power functions, the pAPN at a nonzero point will imply APNess, and, in general, the pAPNess at a nonzero point will imply APNess for quadratic functions). In  Section~\ref{sec5}, in the spirit of Rodier et al. we concentrate on the various linear transformations of some functions to show (non)pAPNess and in the process we show a much stronger version of a result by Leander and Rodier~\cite{LR11}. Section~\ref{sec6} contains the conclusion and further comments.

%%%%%%%%%%%%%%%%%%%%%%%%%%%%%%%%%%%%%%%%%%%%%%%%%%%%%%%

\section{Boolean functions modified at a point}
\label{sec2}

Let $F:\F_{2^n}\to\F_{2^n}$ and consider an arbitrary point $x_0 \in \mathbb{F}_{2^n}$ and some nonzero $\epsilon \in \mathbb{F}_{2^n}^*$. Denote $y_0 = F(x_0)$ and $y_1 = y_0 + \epsilon$. Then the function $F'$ over $\mathbb{F}_{2^n}$ defined by 
\begin{equation}
\label{F'}
F'(x)=\begin{cases}
F(x) & \text{ if } x\neq x_0\\
y_1 & \text{ if } x= x_0
\end{cases}
\end{equation}
is called a ({\em single point}) {\em $(x_0,\epsilon)$-modification of $F$}.

It is rather easy to show that there are  single point modifications of an APN function $F$ that are not APN.
\begin{proposition}
  If an $(n,n)$-function $F$ is APN for $n > 1$, then for any $x_0\in\F_{2^n}$ there exists $\epsilon\in\F_{2^n}^*$ such that the $(x_0,\epsilon)$-modification of $F$ is not APN.
\end{proposition}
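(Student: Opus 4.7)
The plan is to exhibit, for any fixed $x_0 \in \F_{2^n}$, a concrete $\epsilon \neq 0$ and a pair $(a,b)$ with $a \neq 0$ for which $\Delta_{F'}(a,b) \ge 4$, so that $F'$ fails to be APN. The only places where $F'$ differs from $F$ on derivative values are at the two inputs $x_0$ and $x_0 + a$; for any other $x$, we have $F'(x+a)+F'(x) = F(x+a)+F(x)$. Consequently, for a fixed $a \neq 0$, writing $b_a := F(x_0+a)+F(x_0)$, we have
\[
F'(x_0+a)+F'(x_0) \;=\; F(x_0+a)+F(x_0)+\epsilon \;=\; b_a + \epsilon.
\]
Thus the pair $\{x_0,x_0+a\}$, which was a solution set of $F(x+a)+F(x)=b_a$ for $F$, becomes a solution set of $F'(x+a)+F'(x) = b_a+\epsilon$ for $F'$.

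Next I would use the fact that, since $F$ is APN, for every nonzero $a$ the derivative $D_a F$ satisfies $D_a F(x) = D_a F(x+a)$ and each value is attained either $0$ or $2$ times, so $D_a F$ takes exactly $2^{n-1}$ distinct values on $\F_{2^n}$. Because $n > 1$, we have $2^{n-1} \ge 2$, so there exists some $x^* \in \F_{2^n}$ with $\{x^*,x^*+a\} \cap \{x_0,x_0+a\} = \emptyset$ and $D_a F(x^*) \neq b_a$.

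Now define $\epsilon := D_a F(x^*) + b_a \in \F_{2^n}^*$. I then claim that the equation $F'(x+a)+F'(x) = b_a + \epsilon$ has at least four solutions. Indeed, the pair $\{x^*,x^*+a\}$ solves $F(x+a)+F(x) = b_a+\epsilon$ by construction, and since these two points lie outside $\{x_0,x_0+a\}$, the values of $F$ and $F'$ agree on them, so they remain solutions for $F'$. On the other hand, by the displayed identity above, $\{x_0,x_0+a\}$ is also a solution set for $F'(x+a)+F'(x) = b_a+\epsilon$. These four points are distinct, so $\Delta_{F'}(a,b_a+\epsilon) \ge 4 > 2$, and $F'$ is not APN.

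There is no real obstacle here; the only small point requiring care is the counting argument that guarantees the existence of $x^*$, i.e.\ that $D_a F$ is non-constant when $n>1$, which follows immediately from APNess and the pigeonhole-style statement that $|{\rm Im}(D_a F)| = 2^{n-1}$. All other steps are direct verifications from the definition of the $(x_0,\epsilon)$-modification in \eqref{F'}.
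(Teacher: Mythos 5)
Your proof is correct and is in essence the same construction as the paper's: your choice $\epsilon = D_aF(x^*) + D_aF(x_0)$ is exactly the paper's $\epsilon = F(x_0)+F(y)+F(z)+F(x_0+y+z)$ under the identification $y = x^*$, $z = x_0+a$, the only difference being that the paper deduces $\epsilon \neq 0$ and the non-APNness of $F'$ from the Rodier four-point condition, whereas you argue in the derivative language (using $\absval{\mathrm{Im}(D_aF)} = 2^{n-1}$ to pick $x^*$, and exhibiting four solutions so that $\Delta_{F'}(a, b_a+\epsilon) \ge 4$). Both arguments use $n>1$ only to guarantee enough points to choose from, and all your verifications (agreement of $F$ and $F'$ off $x_0$, distinctness of the four solutions) are sound, so there is no gap.
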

\begin{proof}
  Suppose $F$ is APN and $x_0 \in \F_{2^n}$ is given. Take $y,z \in \mathbb{F}_{2^n}$ such that $x_0$, $y$ and $z$ are distinct and let $F'$ be the $(x_0, \epsilon=F(y) + F(z) + F(x_0 + y + z)-F(x_0))$-modification of $F$. Then we have $F'(x_0) \ne F(x_0)$ since $F$ is APN and $F'(x_0) + F'(y) + F'(z) + F'(x_0 + y + z) = 0$ so that $F'$ cannot be APN.
\end{proof}

Next, we find some necessary and sufficient conditions for an $(x_0,\epsilon)$-modification of a given function to be partially APN.

\subsection{Preliminary lemmas}

\begin{lemma}
\label{Walsh-diff}
Let  $F$ be an $(n,n)$-function and $F'$ be an $(x_0,\epsilon)$-modification of $F$ for $x_0,y_1=y_0+\epsilon \in \mathbb{F}_{2^n}$ and $y_1 \ne y_0 = F(x_0)$. Then, 
\[
  \vwht{F'}{a}{b}=\vwht{F}{a}{b}-(-1)^{\Tr(ax_0+by_0)}(1-(-1)^{\Tr(b\,\epsilon)}).
\]
\end{lemma}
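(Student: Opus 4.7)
The plan is a direct computation based on the definition of the Walsh transform; there is no hidden difficulty here, the result is essentially bookkeeping around a single modified input.

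First I would expand
\[
\vwht{F'}{a}{b} = \sum_{x \in \F_{2^n}} (-1)^{\Tr(bF'(x)) + \Tr(ax)}
\]
and isolate the summand at $x = x_0$ from the rest. Because $F'$ agrees with $F$ on $\F_{2^n} \setminus \{x_0\}$, the sum over $x \ne x_0$ equals $\vwht{F}{a}{b}$ minus the corresponding summand from $F$ at $x_0$. Adding and subtracting these two $x_0$-summands gives
\[
\vwht{F'}{a}{b} = \vwht{F}{a}{b} + (-1)^{\Tr(ax_0)}\Bigl[(-1)^{\Tr(by_1)} - (-1)^{\Tr(by_0)}\Bigr].
\]

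Next I would substitute $y_1 = y_0 + \epsilon$ in the exponent, use linearity of the trace to write $(-1)^{\Tr(by_1)} = (-1)^{\Tr(by_0)}(-1)^{\Tr(b\epsilon)}$, and factor out $(-1)^{\Tr(by_0)}$ to obtain
\[
\vwht{F'}{a}{b} = \vwht{F}{a}{b} + (-1)^{\Tr(ax_0)+\Tr(by_0)}(-1)^{\Tr(b\epsilon)}\cdot\bigl(1\bigr) - (-1)^{\Tr(ax_0)+\Tr(by_0)}.
\]
Collecting the bracket yields
\[
\vwht{F'}{a}{b} - \vwht{F}{a}{b} = -(-1)^{\Tr(ax_0 + by_0)}\bigl(1 - (-1)^{\Tr(b\epsilon)}\bigr),
\]
which is the claimed identity.

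There is no real obstacle; the only thing worth double-checking is the sign, which comes from the fact that we are replacing the old value $y_0$ by the new value $y_1$, so the contribution $(-1)^{\Tr(by_0)}$ is removed and $(-1)^{\Tr(by_1)}$ is added. The expression $1 - (-1)^{\Tr(b\epsilon)}$ is $0$ when $\Tr(b\epsilon) = 0$ and $2$ otherwise, which matches the intuition that linear masks $b$ with $\Tr(b\epsilon) = 0$ cannot detect the single-point change.
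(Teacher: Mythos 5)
Your proof is correct and follows essentially the same route as the paper: isolate the $x=x_0$ term, note that the remaining sum recovers $\vwht{F}{a}{b}$ up to adding and subtracting the $x_0$-summand, and then use linearity of the trace with $y_1=y_0+\epsilon$ to factor out $(-1)^{\Tr(ax_0+by_0)}$. The sign check and the observation about $1-(-1)^{\Tr(b\epsilon)}\in\{0,2\}$ are consistent with the paper's statement, so nothing further is needed.
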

\begin{proof}
We have
\begin{align*}
\cW_{F'}(a,b)&=\sum_{x\in\F_{2^n}} (-1)^{\Tr(b F'(x)+ax)}
=\sum_{x\neq x_0} (-1)^{\Tr(b F(x)+ax)}+ (-1)^{\Tr(b y_1+ax_0)}\\
&= \sum_{x\in\F_{2^n}} (-1)^{\Tr(b F(x)+ax)}+(-1)^{\Tr(ax_0+by_1)}-(-1)^{\Tr(ax_0+by_0)},
\end{align*}
which justifies our claim.
\end{proof}
For any given elements $a,b\in\F_{2^n}$, we let
$
E_F(a,b)=(-1)^{\Tr(ax_0+by_0)}D_F(b),
$
where $D_F(b)=1-(-1)^{\Tr(b\,\epsilon)}$. Note that $E_F(a,b)$ depends on $x_0$, $y_0, y_1$.
The following lemma can be easily shown by induction.

\begin{lemma}
  \label{lemmaPowerSimplification}
  Let  $F$ be an $(n,n)$-function and let $x_0,y_1 \in \F_{2^n}$ with $y_1 \ne y_0 = F(x_0)$ and $\epsilon = y_0 + y_1$.
Then for any integer $m\geq 1$ and any elements $a,b\in\F_{2^n}$, we have
\begin{enumerate}[$(i)$]
  \item  $E_F^{2m}(a,b)=2^{2m-1}D_F(b)$, and
\item  $E_F^{2m+1}(a,b)=2^{2m}E_F(a,b)$.
\end{enumerate}
\end{lemma}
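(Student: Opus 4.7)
My plan is to avoid induction entirely and reduce the identities to a single observation about the range of $D_F(b)$. The quantity $D_F(b) = 1 - (-1)^{\Tr(b\epsilon)}$ takes only the values $0$ and $2$: it equals $0$ when $\Tr(b\epsilon) = 0$ and equals $2$ when $\Tr(b\epsilon) = 1$. From this I immediately get the multiplicative identity $D_F(b)^2 = 2\,D_F(b)$, which holds case by case.

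Next I would iterate this: for every integer $k \geq 1$,
\[
  D_F(b)^k = 2^{k-1} D_F(b),
\]
which follows by a one-line induction (or again by checking the two possible values of $D_F(b)$). Since
\[
  E_F(a,b) = (-1)^{\Tr(ax_0 + by_0)} D_F(b),
\]
and the sign factor squares to $1$, I would then write
\[
  E_F^{k}(a,b) = (-1)^{k\,\Tr(ax_0+by_0)} D_F(b)^k = (-1)^{k\,\Tr(ax_0+by_0)}\, 2^{k-1} D_F(b).
\]
Specialising to $k = 2m$ yields $E_F^{2m}(a,b) = 2^{2m-1} D_F(b)$, proving part $(i)$. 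Specialising to $k = 2m+1$ yields
\[
  E_F^{2m+1}(a,b) = (-1)^{\Tr(ax_0+by_0)}\, 2^{2m} D_F(b) = 2^{2m} E_F(a,b),
\]
which is part $(ii)$.

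There is no real obstacle here; the whole lemma is the idempotence-up-to-scaling of the factor $1-(-1)^{\Tr(b\epsilon)}$, which is a Boolean $\{0,2\}$-valued gadget. The only minor care needed is to observe that $(-1)^{k\,\Tr(ax_0+by_0)}$ depends only on the parity of $k$, so even powers kill the sign (giving $(i)$) and odd powers preserve it (giving $(ii)$).
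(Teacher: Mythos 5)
Your proof is correct, and it is essentially the argument the paper has in mind: the paper only asserts the lemma ``can be easily shown by induction,'' and the induction it alludes to reduces to exactly your observation that $D_F(b)=1-(-1)^{\Tr(b\epsilon)}$ takes only the values $0$ and $2$, so $D_F(b)^k=2^{k-1}D_F(b)$ and the sign $(-1)^{\Tr(ax_0+by_0)}$ contributes only through the parity of the exponent. Your closed-form case-check is a clean (if anything slightly more direct) way of writing the same elementary verification, and both parts $(i)$ and $(ii)$ follow exactly as you state.
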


\subsection{The third and fourth moments  and an APN characterization of a one point modification of an APN function }

In the following we make use of the Kronecker function $\delta_0(z)=\begin{cases} 1 & \text{if } z=0\\
0 & \text{if } z\neq 0. \end{cases}$
\begin{theorem}
\label{4thpower}
Let $F$ be an $(n,n)$-function and $F'$ be its $(x_0,\epsilon)$-modification for some $x_0,y_1=y_0+\epsilon \in \mathbb{F}_{2^n}$ with $y_1 \ne y_0 = F(x_0)$.  Then the following hold:
\begin{enumerate}[$(i)$]
\item[$(i)$] $\displaystyle \frac14\sum_{a,b\in \F_{2^n}}\left(
  \vwhtp{F}{a}{b}{4}-\vwhtp{F'}{a}{b}{4}\right) = \sum_{a,b\in \F_{2^n}}
  \vwhtp{F}{a}{b}{3} E_F(a,b) -(3\cdot 2^{3n}-2^{2n+1});$
\item[$(ii)$] $\displaystyle \sum_{a,b\in \F_{2^n}}\left(\vwhtp{F}{a}{b}{3}-\vwhtp{F'}{a}{b}{3}\right)
  =3\sum_{a,b\in \F_{2^n}}  \vwhtp{F}{a}{b}{2} E_F(a,b)
-3\cdot 2^{2n+1}  \\ \cdot \left(\delta_0(F(0))- \delta_0(y_1-y_0+F(0))\right)
+2^{2n+2} \delta_0(x_0) \left(\delta_0(y_0)-\delta_0(y_1) \right).$
%\begin{align*}
%&\sum_{a,b\in \F_{2^n}}\left(\cW_F^3(a,b)-\cW_{F'}^3(a,b)\right)
% =3\sum_{a,b\in \F_{2^n}}  \cW_{F}^2(a,b) E_F(a,b)
%-3\cdot 2^{2n+1}  \\
%&\qquad\cdot \left(\delta_0(F(0))- \delta_0(y_1-y_0+F(0))\right)
%&\qquad\qquad\qquad\qquad\qquad\quad
%+2^{2n+2} \delta_0(x_0) \left(\delta_0(y_0)-\delta_0(y_1) \right).
%\end{align*}
\end{enumerate}
\end{theorem}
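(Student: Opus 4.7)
The plan is to leverage Lemma~\ref{Walsh-diff}, which gives $\vwht{F'}{a}{b} = \vwht{F}{a}{b} - E_F(a,b)$, and expand the differences $W^k - (W-E)^k$ binomially for $k=3,4$, where $W = \vwht{F}{a}{b}$ and $E = E_F(a,b)$. Lemma~\ref{lemmaPowerSimplification} then collapses the powers of $E$ into $E^2 = 2 D_F(b)$, $E^3 = 4E$, $E^4 = 8 D_F(b)$, leaving expressions that are linear in $W^{k-1}E$ plus lower-order corrections. Summing over $a,b \in \F_{2^n}$ reduces each correction term to an elementary character sum.

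For part~(i), the expansion
\[
W^4 - (W-E)^4 = 4 W^3 E - 12 W^2 D_F(b) + 16 WE - 8 D_F(b)
\]
will be summed and divided by $4$. The term $\sum_{a,b} W^3 E$ is the leading quantity on the right-hand side of (i). To evaluate the remaining three sums I will use: Parseval, which gives $\sum_a W^2 = 2^{2n}$ for each $b$; the identity $\sum_b D_F(b) = 2^n$ (since $\epsilon \ne 0$); and the dual Parseval-type identity
\[
\sum_a \vwht{F}{a}{b} (-1)^{\Tr(ax_0)} = \sum_x (-1)^{\Tr(bF(x))} \sum_a (-1)^{\Tr(a(x+x_0))} = 2^n (-1)^{\Tr(by_0)},
\]
from which $\sum_{a,b} WE = 2^n \sum_b D_F(b) = 2^{2n}$. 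Assembling the pieces, the constant contribution becomes $-3\cdot 2^{3n} + 2^{2n+2} - 2^{2n+1} = -(3\cdot 2^{3n} - 2^{2n+1})$, as claimed.

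Part~(ii) follows the same playbook with the cubic expansion $W^3 - (W-E)^3 = 3W^2 E - 6 W D_F(b) + 4E$. The leading $\sum_{a,b} W^2 E$ is the desired term; for the two correction sums, I will use $\sum_a \vwht{F}{a}{b} = 2^n (-1)^{\Tr(bF(0))}$ together with the character-sum identity
\[
\sum_b D_F(b) (-1)^{\Tr(bc)} = 2^n \bigl(\delta_0(c) - \delta_0(c + \epsilon)\bigr).
\]
Taking $c = F(0)$ yields $\sum_{a,b} W D_F(b) = 2^{2n}(\delta_0(F(0)) - \delta_0(F(0) + \epsilon))$, and taking $c = y_0$ (after factoring the $a$-sum into $2^n\delta_0(x_0)$) yields $\sum_{a,b} E = 2^{2n} \delta_0(x_0)(\delta_0(y_0) - \delta_0(y_1))$. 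Rewriting $\epsilon = y_1 - y_0$ and assembling produces the stated identity.

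The main obstacle is purely bookkeeping: tracking the signs in the binomial expansion, correctly ordering each double sum (always $a$ first, via trace orthogonality, then $b$ via the two-term decomposition of $D_F$), and identifying which indicator terms $\delta_0(\cdot)$ emerge at each stage. No conceptual difficulty arises, since every sum ultimately reduces to Parseval's relation and the standard orthogonality of additive characters on $\F_{2^n}$.
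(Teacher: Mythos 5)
Your proposal is correct and follows essentially the same route as the paper: apply Lemma~\ref{Walsh-diff}, expand $W^k-(W-E)^k$ binomially for $k=3,4$, collapse powers of $E$ via Lemma~\ref{lemmaPowerSimplification}, and evaluate the correction sums by Parseval's relation and character orthogonality, yielding exactly the constants $-(3\cdot 2^{3n}-2^{2n+1})$ in $(i)$ and the $\delta_0$-terms in $(ii)$. Nothing is missing.
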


\begin{proof}
We show $(i)$ first.
Taking fourth powers in the identity $\vwht{F'}{a}{b}=\vwht{F}{a}{b}-E_F(a,b)$ of Lemma~\ref{Walsh-diff} and applying Lemma \ref{lemmaPowerSimplification},   we get
\begin{align*}
  &\sum_{a,b\in \F_{2^n}}\left(\vwhtp{F}{a}{b}{4}-\vwhtp{F'}{a}{b}{4}\right)\\
  &=\sum_{a,b\in \F_{2^n}} \left(4\vwhtp{F}{a}{b}{3} E_F(a,b)-6 \vwhtp{F}{a}{b}{2}
  E_F^2(a,b)+4\vwht{F}{a}{b} E_F^3(a,b)-E_F^4(a,b) \right)\\
  &=\sum_{a,b\in \F_{2^n}} \left(4\vwhtp{F}{a}{b}{3} E_F(a,b)-12 \vwhtp{F}{a}{b}{2} D_F(b)+16\vwht{F}{a}{b} E_F(a,b)-8D_F(b)\right).
\end{align*}
Thus,
\begin{align*}
  &\frac14\sum_{a,b\in \F_{2^n}}\left( \vwhtp{F}{a}{b}{4}-\vwhtp{F'}{a}{b}{4}\right) \\
  &     =\sum_{a,b\in \F_{2^n}} \left(\vwhtp{F}{a}{b}{3} E_F(a,b)-3 \vwhtp{F}{a}{b}{2} D_F(b)\newline +4\vwht{F}{a}{b} E_F(a,b)-2D_F(b)\right).
\end{align*}
%Thus
%\begin{align}\label{eqW4_2new}
%\frac{1}{4}\(W_F^4(a,b)-W_{F'}^4(a,b)\)
%&=\sum_{a,b\in \F_{2^n}} \left( 4\cW_{F}^3(a,b) E_F(a,b)-12 \cW_{F}^2(a,b) D_F(b)\right.\\
%&\qquad \left.+16\cW_{F}(a,b) E_F(a,b)-8D_F(b)\right).
%\end{align}
We now observe that
$\displaystyle
\sum_{a,b\in\F_{2^n}} D_F(b)=2^n\sum_{b\in\F_{2^n}} D_F(b)=2^n
\sum_{b\in\F_{2^n}} (1-(-1)^{\Tr(b\, \epsilon)})=2^{2n},
$
since  $\sum_{b\in\F_{2^n}} (-1)^{\Tr(b\, \epsilon)}=0$ when
$\epsilon\neq 0$. Further, by Parseval's identity we get
$\displaystyle
\sum_{a,b\in\F_{2^n}} \vwhtp{F}{a}{b}{2} D_F(b)=\sum_{b\in\F_{2^n}} D_F(b) \sum_{a\in\F_{2^n}} \vwhtp{F}{a}{b}{2}
=2^{2n}\sum_{b\in\F_{2^n}} D_F(b) =2^{3n}.
$
Finally, we use the inverse Walsh-Hadamard transform to obtain
\allowdisplaybreaks[4]
\begin{align*}
 & \sum_{a,b\in\F_{2^n}} \vwht{F}{a}{b} E_F(a,b) =\sum_{a,b,u\in\F_{2^n}}  (-1)^{\Tr(b(F(u)+y_0)+a(u+x_0))} D_F(b)\\
&\qquad\qquad=\sum_{b,u\in\F_{2^n}} \left(D_F(b) (-1)^{\Tr(b(F(u)+y_0))}
 \sum_{a\in\F_{2^n}}  (-1)^{\Tr(a(u+x_0))}\right)\\
&\qquad\qquad=2^n\sum_{b\in\F_{2^n}} \left(    D_F(b)  (-1)^{\Tr(b(F(x_0)+y_0))} \right)=2^n\sum_{b\in\F_{2^n}}D_F(b)=2^{2n}.
%&=2^n \left(\sum_{b\in\F_{2^n}}    (-1)^{\Tr(b(F(x_0)+y_0))}-
%\sum_{b\in\F_{2^n}}    (-1)^{\Tr(b(F(x_0)+y_1))}  \right)=2^{2n}.
\end{align*}

Combining the above results, we obtain
\begin{align*}
\frac14\sum_{a,b\in \F_{2^n}}\left(
\vwhtp{F}{a}{b}{4}-\vwhtp{F'}{a}{b}{4}\right) =\sum_{a,b\in \F_{2^n}}
\vwhtp{F}{a}{b}{3} E_F(a,b)-(3\cdot 2^{3n}-2^{2n+1}),
\end{align*}
and our first claim is shown.

%Let $\delta_i=\begin{cases} 1 & \text{if } y_i=0\\
%0 & \text{if } y_i\neq 0 \end{cases}$ ($i=0,1$) and $\gamma_0=\begin{cases} 1 & \text{if } x_0=0\\
%0 & \text{if } x_0\neq 0 \end{cases}$.
By a similar argument as in   part $(i)$,  we obtain
\allowdisplaybreaks
\begin{align}
  &\sum_{a,b\in \F_{2^n}}\left(\vwhtp{F}{a}{b}{3}-\vwhtp{F'}{a}{b}{3}\right)\notag\\
  &=\sum_{a,b\in \F_{2^n}}\left(3\vwhtp{F}{a}{b}{2} E_F(a,b)-3\vwht{F}{a}{b} E_F^2(a,b)+E_F^3(a,b) \right)\label{eq:Wcube}\\
  &=3\sum_{a,b\in \F_{2^n}} \vwhtp{F}{a}{b}{2} E_F(a,b)-6\sum_{a,b\in \F_{2^n}}\vwht{F}{a}{b} D_F(b)+4\sum_{a,b\in \F_{2^n}}E_F(a,b).\notag
\end{align}
Furthermore, with $\epsilon = y_1 - y_0$, we compute
\allowdisplaybreaks
\begin{align*}
  &\sum_{a,b\in \F_{2^n}}\vwht{F}{a}{b} D_F(b)\\
%&=\sum_{b\in \F_{2^n}} \left(1-(-1)^{\Tr(b\epsilon)}\right) \sum_{a\in \F_{2^n}} \sum_{u\in \F_{2^n}} (-1)^{\Tr(bF(u)+au)}\\
&=\sum_{b\in \F_{2^n}} \left(1-(-1)^{\Tr(b\epsilon)}\right) \sum_{u\in \F_{2^n}} (-1)^{\Tr(bF(u))} \sum_{a\in \F_{2^n}}  (-1)^{\Tr(au)}\\
&= 2^n\sum_{b\in \F_{2^n}} \left(1-(-1)^{\Tr(b\epsilon)}\right)   (-1)^{\Tr(bF(0))} \\
&=2^n \left(\sum_{b\in \F_{2^n}} (-1)^{\Tr(bF(0))} - \sum_{b\in \F_{2^n}} (-1)^{\Tr(b(y_1-y_0+F(0))}\right)\\
&=2^{2n} \left(\delta_0(F(0))- \delta_0(y_1-y_0+F(0))\right),
\end{align*}
and
\allowdisplaybreaks[4]
\begin{align*}
\sum_{a,b\in \F_{2^n}} E_F(a,b)
 &= \sum_{a,b\in \F_{2^n}}(-1)^{\Tr(ax_0+by_0)} \left(1-(-1)^{\Tr(b(y_1-y_0))} \right)\\
%&\qquad = \sum_{a,b\in \F_{2^n}}(-1)^{\Tr(ax_0+by_0)} -\sum_{a,b\in \F_{2^n}}(-1)^{\Tr(ax_0+by_1)} \\
%&\qquad = \sum_{b\in \F_{2^n}}(-1)^{\Tr(by_0)} \sum_{a\in \F_{2^n}}(-1)^{\Tr(ax_0)} \\
%&\qquad\qquad -\sum_{b\in \F_{2^n}}(-1)^{\Tr(by_1)} \sum_{a\in \F_{2^n}}(-1)^{\Tr(ax_0)}
 &= 2^{2n}\delta_0(x_0) \left(\delta_0(y_0)-\delta_0(y_1) \right).
\end{align*}

Using these identities in~\eqref{eq:Wcube}, we obtain
\allowdisplaybreaks[4]
\begin{align*}
  &\sum_{a,b\in \F_{2^n}}\left(\vwhtp{F}{a}{b}{3}-\vwhtp{F'}{a}{b}{3}\right)\\
  &=3\sum_{a,b\in \F_{2^n}}  \vwhtp{F}{a}{b}{2} E_F(a,b)-3\cdot 2^{2n+1} \left(\delta_0(F(0))- \delta_0(y_1-y_0+F(0))\right)\\
&\qquad\qquad\qquad\qquad\qquad\qquad\qquad\qquad\quad +2^{2n+2} \delta_0(x_0) \left(\delta_0(y_0)-\delta_0(y_1) \right),
\end{align*}
and the theorem is shown.
\end{proof}

\begin{corollary}
  Let  $F$ be an $(n,n)$-function satisfying $F(0)=0$, and $x_0\in\F_{2^n}$, $\epsilon\in\F_{2^n}^*$. Let further $F'$ be its $(x_0,\epsilon)$-modification. Then we have, with $y_1 = F(x_0) + \epsilon:$
\begin{enumerate}[$(a)$]
\item
 if $x_0=y_0=0$ then $y_1\neq0$ and
 $$\sum_{a,b\in \F_{2^n}}\left(\vwhtp{F}{a}{b}{3}-\vwhtp{F'}{a}{b}{3}\right)=3\sum_{a,b\in \F_{2^n}}  \vwhtp{F}{a}{b}{2} E_F(a,b)-2^{2n+1};$$
\item if $x_0\neq 0$ then
  $$\sum_{a,b\in \F_{2^n}}\left(\vwhtp{F}{a}{b}{3}-\vwhtp{F'}{a}{b}{3}\right)=3\sum_{a,b\in \F_{2^n}}  \vwhtp{F}{a}{b}{2} E_F(a,b)-3\cdot 2^{2n+1}.$$
\end{enumerate}
\end{corollary}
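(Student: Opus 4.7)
The plan is to apply part (ii) of Theorem~\ref{4thpower} directly, using the additional hypothesis $F(0)=0$ (and the case assumption on $x_0$) to simplify the two correction terms involving Kronecker deltas. Write out the identity
\[
\sum_{a,b}\!\left(\vwhtp{F}{a}{b}{3}-\vwhtp{F'}{a}{b}{3}\right)
= 3\!\sum_{a,b}\!\vwhtp{F}{a}{b}{2}E_F(a,b)
-3\cdot 2^{2n+1}\bigl(\delta_0(F(0))-\delta_0(y_1-y_0+F(0))\bigr)
+2^{2n+2}\delta_0(x_0)\bigl(\delta_0(y_0)-\delta_0(y_1)\bigr),
\]
and observe that $F(0)=0$ forces $\delta_0(F(0))=1$, while $y_1-y_0+F(0)=\epsilon\neq 0$ forces $\delta_0(y_1-y_0+F(0))=0$. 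Hence the first correction term collapses to $-3\cdot 2^{2n+1}$ unconditionally, and only the second correction term depends on the case.

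For part $(a)$, assume $x_0=y_0=0$. Since $y_0=F(x_0)=F(0)=0$ is forced by $F(0)=0$, the hypothesis $y_0=0$ is automatic; and since $\epsilon\neq 0$, we obtain $y_1=\epsilon\neq 0$. Therefore $\delta_0(x_0)=\delta_0(y_0)=1$ and $\delta_0(y_1)=0$, so the second correction term equals $2^{2n+2}$. Adding it to $-3\cdot 2^{2n+1}=-6\cdot 2^{2n}$ gives $-2\cdot 2^{2n}=-2^{2n+1}$, which is exactly the claimed right-hand side.

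For part $(b)$, assume $x_0\neq 0$. Then $\delta_0(x_0)=0$, so the second correction term vanishes identically, and only the $-3\cdot 2^{2n+1}$ term remains, matching the displayed formula.

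There is no real obstacle here: the whole corollary is a bookkeeping consequence of Theorem~\ref{4thpower}(ii), whose proof has already done the analytic work. The only subtlety worth flagging explicitly (to avoid the reader wondering why $y_0=0$ is listed separately in $(a)$) is that under $F(0)=0$ the condition $x_0=0$ already implies $y_0=0$; I would include a one-line remark noting this redundancy before plugging in the values of the Kronecker deltas.
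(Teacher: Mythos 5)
Your proposal is correct and follows exactly the paper's own route: the paper's proof is simply that the corollary "follows easily from Theorem~\ref{4thpower}~$(ii)$," and your substitution of the Kronecker delta values under $F(0)=0$ (with the case split on $x_0$) together with the arithmetic $-3\cdot 2^{2n+1}+2^{2n+2}=-2^{2n+1}$ is precisely that verification, carried out in detail.
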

\begin{proof} Follows easily from Theorem~\ref{4thpower}~$(ii)$.
\end{proof}

\begin{corollary}
\label{CorCond}
 Let $F$ be an APN $(n,n)$-function satisfying $F(0)=0$. Let $x_0=0=y_0$, and let $F'$ be the $(0,\epsilon)$-modification of $F$. Then, $F'$ is APN if and only if $$\sum_{a,b\in \F_{2^n}}\vwhtp{F}{a}{b}{3}(-1)^{\Tr(b\epsilon)}=0.$$
\end{corollary}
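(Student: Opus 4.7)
The plan is to combine the fourth-moment characterization of APNness (Lemma~\ref{APN:char1}$(i)$) with the identity in Theorem~\ref{4thpower}$(i)$, and then simplify $E_F(a,b)$ using the special hypothesis $x_0=y_0=0$.

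First, I would apply Lemma~\ref{APN:char1}$(i)$ to both $F$ and $F'$ simultaneously. Since $F$ is APN we have $\sum_{a,b}\vwhtp{F}{a}{b}{4}=2^{3n+1}(3\cdot 2^{n-1}-1)$, while for $F'$ (which is an arbitrary $(n,n)$-function — note $F'(0)=\epsilon\ne 0$, but the $4$th-moment characterization does not require the function to vanish at $0$) we have $F'$ APN if and only if $\sum_{a,b}\vwhtp{F'}{a}{b}{4}=2^{3n+1}(3\cdot 2^{n-1}-1)$. Subtracting these, $F'$ is APN if and only if
\[
\sum_{a,b\in\F_{2^n}}\!\left(\vwhtp{F}{a}{b}{4}-\vwhtp{F'}{a}{b}{4}\right)=0.
\]

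Next, I would plug this vanishing condition into Theorem~\ref{4thpower}$(i)$, which recasts it as
\[
\sum_{a,b\in\F_{2^n}}\vwhtp{F}{a}{b}{3}\,E_F(a,b)=3\cdot 2^{3n}-2^{2n+1}.
\]
With $x_0=y_0=0$ the definition of $E_F$ collapses to $E_F(a,b)=D_F(b)=1-(-1)^{\Tr(b\epsilon)}$, so splitting the sum gives
\[
\sum_{a,b}\vwhtp{F}{a}{b}{3}-\sum_{a,b}\vwhtp{F}{a}{b}{3}(-1)^{\Tr(b\epsilon)}=3\cdot 2^{3n}-2^{2n+1}.
\]

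Finally, because $F$ is APN and $F(0)=0$, Lemma~\ref{APN:char1}$(ii)$ evaluates the first sum exactly as $2^{2n+1}(3\cdot 2^{n-1}-1)=3\cdot 2^{3n}-2^{2n+1}$. Cancelling, the APNness of $F'$ is equivalent to $\sum_{a,b}\vwhtp{F}{a}{b}{3}(-1)^{\Tr(b\epsilon)}=0$, which is exactly the claim. There is no real obstacle here: the result is essentially a three-line bookkeeping exercise once Theorem~\ref{4thpower}$(i)$ and the two parts of Lemma~\ref{APN:char1} are lined up; the only thing worth being careful about is that the $3$rd-moment identity of Lemma~\ref{APN:char1}$(ii)$ is being applied to $F$ (which satisfies $F(0)=0$), not to $F'$.
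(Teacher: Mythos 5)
Your proposal is correct and follows essentially the same route as the paper: Lemma~\ref{APN:char1}$(i)$ applied to $F$ and $F'$, Theorem~\ref{4thpower}$(i)$ to translate the fourth-moment difference, the collapse $E_F(a,b)=1-(-1)^{\Tr(b\epsilon)}$ when $x_0=y_0=0$, and the third-moment value from Lemma~\ref{APN:char1}$(ii)$ to cancel. If anything, your bookkeeping is slightly cleaner, since the paper's closing appeal to the preceding corollary is unnecessary once one notes $2^{2n+1}(3\cdot 2^{n-1}-1)=3\cdot 2^{3n}-2^{2n+1}$, which you do explicitly.
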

\begin{proof}
 By Lemma \ref{APN:char1},
$\displaystyle
\sum_{a,b\in \F_{2^n}} \vwhtp{F}{a}{b}{4} = 2^{3n+1}(3\cdot 2^{n-1}-1),$
and $\displaystyle \sum_{a,b\in \F_{2^n}} \vwhtp{F}{a}{b}{3}=2^{2n+1}(3\cdot 2^{n-1}-1)$. Also, by the same lemma, $F'$ is APN if and only if $\displaystyle
\sum_{a,b\in \F_{2^n}} \vwhtp{F'}{a}{b}{4} = 2^{3n+1}(3\cdot 2^{n-1}-1)
$. This, together with Theorem~\ref{4thpower}\,$(i)$, implies that $F'$ is APN if and only if
$$\sum_{a,b\in \F_{2^n}}\vwhtp{F}{a}{b}{3} E_F(a,b)=3\cdot 2^{3n}-2^{2n+1}.$$
On the other hand, since $x_0=0=y_0$, $E_F(a,b)=D_F(a,b)=1-(-1)^{\Tr(b\epsilon)}$, and
\begin{align*}
  &\sum_{a,b\in \F_{2^n}}\vwhtp{F}{a}{b}{3} E_F(a,b)=\sum_{a,b\in \F_{2^n}}\vwhtp{F}{a}{b}{3}-\sum_{a,b\in \F_{2^n}}\vwhtp{F}{a}{b}{3}(-1)^{\Tr(b\epsilon)}\\
  &=2^{2n+1}(3\cdot 2^{n-1}-1)-\sum_{a,b\in \F_{2^n}}\vwhtp{F}{a}{b}{3}(-1)^{\Tr(b\epsilon)}.
  \end{align*}
This, together with the previous corollary, gives the sufficient and necessary condition $\sum_{a,b\in \F_{2^n}}\vwhtp{F}{a}{b}{3}(-1)^{\Tr(b\epsilon)}=0$.
\end{proof}

\subsection{A local-global principle of APNess}

In this subsection we will show that a single point modification of an APN function is APN if and only if is partially APN.

\begin{theorem}\label{mainthm}
  Let $F$ be an $(n,n)$-function and $F'$ be its $(x_0,\epsilon)$-modification, $y_1=y_0+\epsilon$. For any $x,y \in \F_{2^n}$, let
\begin{eqnarray*}
T_{x,y}&=&\{(u,v)\in \F_{2^n}^2 : (u+x)(v+x)(u+v)\neq 0, \\
&&\qquad\qquad\qquad\qquad  F(u)+F(v)+F(u+v+x)+y=0 \},\\
S_{x,y}&=&\{u\in\F_{2^n} : F(u)+F(u+x)+y=0 \}.
\end{eqnarray*}
Then$:$
\begin{enumerate}[$(i)$]
\item[$(i)$]
$\displaystyle
\sum_{a,b\in \F_{2^n}}  \vwhtp{F}{a}{b}{3} E_F(a,b)=2^{2n}\left(3\cdot
2^n-2+\cardinality{T_{x_0,y_0}}-\cardinality{T_{x_0,y_1}} \right);
$
\item[$(ii)$]
$\displaystyle
\sum_{a,b\in \F_{2^n}}  \vwhtp{F}{a}{b}{2} E_F(a,b)=2^{2n}\left(\cardinality{S_{x_0,y_0}}-\cardinality{S_{x_0,y_1}} \right).
$
\end{enumerate}
\end{theorem}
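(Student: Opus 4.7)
The plan is to expand both sums by first writing $E_F(a,b)=(-1)^{\Tr(ax_0+by_0)}-(-1)^{\Tr(ax_0+by_1)}$ (which follows immediately from the definition together with $\epsilon=y_0+y_1$), and then expanding $\vwhtp{F}{a}{b}{k}$ as an iterated sum over $k$-tuples of inputs. After swapping the order of summation, the sum over $a$ collapses via $\sum_a (-1)^{\Tr(a \xi)} = 2^n\delta_0(\xi)$ to enforce a single linear relation among the input variables, and the two sums over $b$ collapse analogously to enforce two equalities, one involving $y_0$ and one involving $y_1$. The result in each case will be $2^{2n}$ times the difference of the cardinalities of two solution sets.

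For part $(ii)$, expanding $\vwhtp{F}{a}{b}{2}=\sum_{u,v}(-1)^{\Tr(b(F(u)+F(v))+a(u+v))}$ and carrying out the above procedure forces $v=u+x_0$ and yields $\cardinality{\{u:F(u)+F(u+x_0)+y_0=0\}}-\cardinality{\{u:F(u)+F(u+x_0)+y_1=0\}}$, which is precisely $\cardinality{S_{x_0,y_0}}-\cardinality{S_{x_0,y_1}}$. This part is routine; no degenerate bookkeeping is needed.

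For part $(i)$, the analogous expansion of $\vwhtp{F}{a}{b}{3}$ with triples $(u,v,w)$ forces $w=u+v+x_0$ after the $a$-sum, and produces
\[
2^{2n}\bigl(N_{x_0,y_0}-N_{x_0,y_1}\bigr),\quad N_{x_0,y}:=\cardinality{\{(u,v)\in\F_{2^n}^2 : F(u)+F(v)+F(u+v+x_0)+y=0\}}.
\]
The main obstacle, and the only nontrivial step, is relating $N_{x_0,y}$ to $\cardinality{T_{x_0,y}}$, which counts the same solutions but excludes the degenerate locus $(u+x_0)(v+x_0)(u+v)=0$. I would identify the three ``degenerate'' subsets $A_1=\{v=u\}$, $A_2=\{u=x_0\}$, $A_3=\{v=x_0\}$, each of size $2^n$, pairwise intersecting only in $(x_0,x_0)$. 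On each of these subsets one checks directly that $F(u)+F(v)+F(u+v+x_0)$ collapses to $F(x_0)=y_0$, so every point of $A_1\cup A_2\cup A_3$ satisfies the defining equation for $y=y_0$, while none of them satisfies it for $y=y_1$ because $y_0+y_1=\epsilon\neq 0$.

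Inclusion-exclusion then gives $\cardinality{A_1\cup A_2\cup A_3}=3\cdot 2^n-3+1=3\cdot 2^n-2$, hence $N_{x_0,y_0}=\cardinality{T_{x_0,y_0}}+3\cdot 2^n-2$ and $N_{x_0,y_1}=\cardinality{T_{x_0,y_1}}$. Substituting yields the claimed formula
\[
\sum_{a,b\in\F_{2^n}}\vwhtp{F}{a}{b}{3}E_F(a,b)=2^{2n}\bigl(3\cdot 2^n-2+\cardinality{T_{x_0,y_0}}-\cardinality{T_{x_0,y_1}}\bigr),
\]
completing the proof. The only place one needs to be careful is the inclusion-exclusion counting of the degenerate locus and the verification that $y_0=F(x_0)$ makes the equation trivially hold there; everything else is a mechanical Walsh-transform computation of the same flavor as Theorem~\ref{4thpower}.
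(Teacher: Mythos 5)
Your proposal is correct and follows essentially the same route as the paper's proof: expand $\vwhtp{F}{a}{b}{3}$ and $\vwhtp{F}{a}{b}{2}$ as character sums, collapse the $a$-sum to force $w=u+v+x_0$ (resp.\ $v=u+x_0$) and the $b$-sum to an indicator, then account for the $3\cdot 2^n-2$ degenerate pairs, all of which satisfy the equation with $y_0$ and none with $y_1$. The only difference is that you make explicit the inclusion--exclusion count and the verification that the expression collapses to $y_0$ on the degenerate locus, details the paper states only tersely.
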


\begin{proof}
To show $(i)$, we write
\allowdisplaybreaks
\begin{align}
  &\sum_{a,b\in \F_{2^n}}  \vwhtp{F}{a}{b}{3} E_F(a,b)
=\sum_{a,b\in \F_{2^n}}  \left(1-(-1)^{\Tr(b\, \epsilon)}\right) \notag \\
&\qquad\qquad\quad\cdot \sum_{u,v,w\in \F_{2^n}}
(-1)^{\Tr(b(F(u)+F(v)+F(w)+y_0))}
(-1)^{\Tr(a(u+v+w+x_0))} \notag\\
&=\sum_{b,u,v,w\in \F_{2^n}}  \left(1-(-1)^{\Tr(b\, \epsilon)}\right) (-1)^{\Tr(b(F(u)+F(v)+F(w)+y_0))} \notag\\
&\qquad\qquad\quad\cdot
\sum_{a\in \F_{2^n}} (-1)^{\Tr(a(u+v+w+x_0))} \notag\\
&=  \ 2^n
\sum_{b,u,v\in \F_{2^n}}  \left(1-(-1)^{\Tr(b\, \epsilon)}\right) (-1)^{\Tr(b(F(u)+F(v)+F(u+v+x_0)+y_0))} \notag\\
&=2^n \sum_{u,v\in \F_{2^n}}  \left( \sum_{b\in \F_{2^n}} (-1)^{\Tr(b(F(u)+F(v)+F(u+v+x_0)+y_0))}\right. \label{keyeq1}\\
&\left.\qquad\qquad\quad - \sum_{b\in \F_{2^n}}
(-1)^{\Tr(b(F(u)+F(v)+F(u+v+x_0)+y_1))} \right). \label{keyeq2}
\end{align}

%Since
%$E_F(a,b)=(-1)^{\Tr(ax_0+by_0)}-(-1)^{\Tr(ax_0+by_1)}$, we obtain
%$
%\sum_{a,b\in \F_{2^n}}  \cW_{F}^3(a,b) E_F(a,b)=\sum_{a,b\in
%\F_{2^n}}  \cW_{F}^3(a,b)(-1)^{\Tr(ax_0+by_0)}\\
%-\sum_{a,b\in
%\F_{2^n}}  \cW_{F}^3(a,b)(-1)^{\Tr(ax_0+by_1)}.
%$

Now, the inner sums in \eqref{keyeq1} and \eqref{keyeq2} will be zero unless one of the exponents is zero, that is, unless $F(u)+F(v)+F(u+v+x_0)+F(x_0)=0$ or $F(u)+F(v)+F(u+v+x_0)+y_1=0$.

Since there are $3\cdot 2^{n}-2$ pairs $(u,v)$ satisfying
$(u+x_0)(v+x_0)(u+v)=0$, the above equation becomes
\begin{align*}
  \sum_{a,b\in \F_{2^n}}  \vwhtp{F}{a}{b}{3} E_F(a,b) = 2^{2n}\left(3\cdot 2^n-2+\cardinality{T_{x_0,y_0}}-\cardinality{T_{x_0,y_1}} \right),
\end{align*}
and the first claim is proven.
To show $(ii)$  we write
\begin{align*}
&\sum_{a,b\in \F_{2^n}} \cW_F^2 (a,b) E_F(a,b)=\sum_{a,b\in \F_{2^n}} \left( \sum_{u,v\in \F_{2^n}} (-1)^{\Tr(a(u+v+x_0)+b(F(u)+F(v)+y_0))}\right.\\
&\left.\qquad\qquad\qquad\qquad\qquad\qquad\qquad-\sum_{u,v\in \F_{2^n}} (-1)^{\Tr(a(u+v+x_0)+b(F(u)+F(v)+y_1))}\right)\\
&=\sum_{b\in \F_{2^n}} \sum_{u,v\in\F_{2^n}}  \left((-1)^{\Tr(b(F(u)+F(v)+y_0))}-(-1)^{\Tr(b(F(u)+F(v)+y_1))}\right) \\
&\qquad\qquad\qquad\qquad\qquad\qquad\qquad\qquad\qquad\qquad\cdot \sum_{a\in \F_{2^n}}  (-1)^{\Tr(a(u+v+x_0))}\\
&=2^n\sum_{u\in \F_{2^n}} \sum_{b\in\F_{2^n}} \left((-1)^{\Tr(b(F(u)+F(u+x_0)+y_0))}-(-1)^{\Tr(b(F(u)+F(u+x_0)+y_1))}\right) \\
&=2^{2n}\left(|S_{x_0,y_0}|-|S_{x_0,y_1}| \right),
\end{align*}
and the theorem is proven.
\end{proof}

Note that in the above theorem we in fact showed that
\begin{align*}
&\sum_{a,b\in \F_{2^n}}  \vwhtp{F}{a}{b}{3}(-1)^{\Tr(ax_0+by_0)}
=2^{2n}\left(3\cdot 2^n-2+\cardinality{T_{x_0,y_0}} \right),\\
&\sum_{a,b\in \F_{2^n}}
\vwhtp{F}{a}{b}{3}(-1)^{\Tr(ax_0+by_1)}=2^{2n}\left(\cardinality{T_{x_0,y_1}}
\right).
\end{align*}
That is, for an $(n,n)$-function $F$ and its one point
modification $F'$ at $x_0$, Theorem~\ref{mainthm} gives
\begin{align}
  &\sum_{a,b\in \F_{2^n}}  \vwhtp{F}{a}{b}{3} E_F(a,b) \notag \\
 &\quad =\sum_{a,b\in \F_{2^n}} \vwhtp{F}{a}{b}{3}(-1)^{\Tr(ax_0+by_0)}-\sum_{a,b\in \F_{2^n}}
\vwhtp{F}{a}{b}{3}(-1)^{\Tr(ax_0+by_1)} \notag\\
&\quad =2^{2n}\left(3\cdot 2^n-2+\cardinality{T_{x_0,y_0}}\right)
-2^{2n}\left(\cardinality{T_{x_0,y_1}} \right) \label{Teq}.
\end{align}
By Theorem~\ref{4thpower}, we get
\begin{align*}
\frac14\sum_{a,b\in \F_{2^n}}\left( \vwhtp{F}{a}{b}{4}-\vwhtp{F'}{a}{b}{4}\right) & = \sum_{a,b\in \F_{2^n}}  \vwhtp{F}{a}{b}{3} E_F(a,b)-2^{2n}(3\cdot 2^n-2)\\ &=2^{2n}(\cardinality{T_{x_0,y_0}}-\cardinality{T_{x_0,y_1}}),
\end{align*}
where the last equality comes from the equation~\eqref{Teq}.

Therefore, we obtain the following equivalence:
\begin{align}\label{localglobal}
  \sum_{a,b\in \F_{2^n}}\left( \vwhtp{F}{a}{b}{4}-\vwhtp{F'}{a}{b}{4}\right)=0
\Longleftrightarrow  \cardinality{T_{x_0,y_0}}=\cardinality{T_{x_0,y_1}}.
\end{align}

The definition of $x_0$-APN implies that $F'$ is
$x_0$-APN if and only if
$
(u+x_0)(v+x_0)(u+v)\neq 0 \Longrightarrow
F'(u)+F'(v)+y_1+F'(u+v+x_0)\neq 0.
$
 However, when $(u+x_0)(v+x_0)(u+v)\neq 0$, one has
$
F'(u)+F'(v)+y_1+F'(u+v+x_0)=F(u)+F(v)+y_1+F(u+v+x_0).
$
Therefore, $F'$ is $x_0$-APN if and only if
$
(u+x_0)(v+x_0)(u+v)\neq 0 \Longrightarrow
F(u)+F(v)+y_1+F(u+v+x_0)\neq 0.
$
In other words, $F'$ is $x_0$-APN if and only if $T_{x_0,y_1}$ is the empty set.

 Now, the set $T_{x_0,y_0}$ with $y_0=F(x_0)$ is empty if and only if $F$ is $x_0$-APN.  By \eqref{localglobal} and Lemma \ref{APN:char1} we have:
\begin{theorem}\label{th2.6}
  If $F$ is APN and its $(x_0,\epsilon)$-modification $F'$ with $\epsilon\neq 0$ is $x_0$-APN, then $F'$ is APN.
\end{theorem}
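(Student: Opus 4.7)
The plan is to chain together the machinery developed right before the statement. The two hypotheses each translate into the emptiness of one of the sets $T_{x_0,y_0}$ and $T_{x_0,y_1}$, and the equivalence in~\eqref{localglobal} lets me convert the resulting equality of cardinalities into equality of fourth Walsh moments, at which point Lemma~\ref{APN:char1}$(i)$ delivers APNess of $F'$.

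More concretely, first I would unpack the hypotheses. Since $F$ is APN, the Rodier characterization (Lemma~\ref{APN:char1}$(iii)$) forces $T_{x_0,y_0}=\emptyset$ when $y_0=F(x_0)$: any triple giving $F(u)+F(v)+F(u+v+x_0)+F(x_0)=0$ must lie on $(u+x_0)(v+x_0)(u+v)=0$. Next, using the observation made just before the theorem, $F'$ being $x_0$-APN is equivalent to $T_{x_0,y_1}=\emptyset$, because on the locus $(u+x_0)(v+x_0)(u+v)\neq 0$ the values of $F'$ and $F$ coincide, while $F'(x_0)=y_1$. Hence $\cardinality{T_{x_0,y_0}}=\cardinality{T_{x_0,y_1}}=0$.

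Applying the equivalence~\eqref{localglobal} to these equal cardinalities yields
\[
\sum_{a,b\in\F_{2^n}}\bigl(\vwhtp{F}{a}{b}{4}-\vwhtp{F'}{a}{b}{4}\bigr)=0,
\]
so $\sum_{a,b}\vwhtp{F'}{a}{b}{4}=\sum_{a,b}\vwhtp{F}{a}{b}{4}$. Since $F$ is APN, the right-hand side equals $2^{3n+1}(3\cdot 2^{n-1}-1)$ by Lemma~\ref{APN:char1}$(i)$. Thus $F'$ also attains the lower bound in Lemma~\ref{APN:char1}$(i)$, and the equality case of that lemma forces $F'$ to be APN.

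There is essentially no obstacle; the entire work has already been done in Theorem~\ref{4thpower}, Theorem~\ref{mainthm}, and the derivation of~\eqref{localglobal}. The only place to be careful is the bookkeeping that $F'$ on the ``off-diagonal'' set $\{(u,v):(u+x_0)(v+x_0)(u+v)\neq 0\}$ really does agree with $F$ at the three relevant arguments $u$, $v$, $u+v+x_0$ (none of which equals $x_0$ there), which is what justifies identifying $T_{x_0,y_1}$ for $F$ with the $x_0$-APN defect of $F'$.
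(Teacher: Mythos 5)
Your proposal is correct and follows essentially the same route as the paper: both hypotheses are translated into $\cardinality{T_{x_0,y_0}}=\cardinality{T_{x_0,y_1}}=0$, the equivalence~\eqref{localglobal} converts this into equality of the fourth Walsh moments, and the equality case of Lemma~\ref{APN:char1}$(i)$ then gives APNess of $F'$. (The paper also sketches a second, purely combinatorial argument from the definition of a one-point modification, but its primary derivation is exactly yours.)
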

Note that this can also be directly derived from the definition of one point modification. Indeed, suppose to the contrary, that $F'$ is $x_0$-APN but it is not APN. Then for some $a\ne 0$ and some $b$ the equation $F'(x+a)+F'(x)=b$ has more than 2 solutions. Let $x_1,x_2,x_3$ be three distinct solutions to this equation. We consider two cases. If $\{x_1,x_2,x_3\}\cap\{x_0,x_0+a\}=\emptyset$ then $F'(x_i+a)+F'(x_i)=F(x_i+a)+F(x_i)$ for $i\in\{1,2,3\}$ and this contradicts $F$ being APN. If $\{x_1,x_2,x_3\}\cap\{x_0,x_0+a\}\neq\emptyset$, then it contradicts the fact that $F'$ is $x_0$-APN.

In light of Theorem \ref{th2.6}, it follows that the conjecture from \cite{CB18} can be strengthened as follows:
\begin{conjecture}
\em  An $(x_0,\epsilon)$-modification of an APN function with $\epsilon \ne 0$ is  not $x_0$-APN.
\end{conjecture}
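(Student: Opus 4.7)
The plan is to recast the conjecture as a non-emptiness statement and attack it through the derivative structure of $F$. By the discussion immediately preceding the conjecture, $F'$ is $x_0$-APN exactly when $T_{x_0,y_1}=\emptyset$, where $y_1=F(x_0)+\epsilon$. So the task reduces to exhibiting, for every APN function $F$ and every $\epsilon\in\F_{2^n}^*$, a pair $(u,v)$ with $(u+x_0)(v+x_0)(u+v)\neq 0$ satisfying $F(u)+F(v)+F(u+v+x_0)=F(x_0)+\epsilon$.

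Parametrizing $u=x$, $v=x+a$, the equation becomes $D_aF(x)=D_aF(x_0)+\epsilon$ for some $a\in\F_{2^n}^*$ and some $x$. Since $F$ is APN, each derivative $D_aF$ is $2$-to-$1$ with image of size $2^{n-1}$; moreover, if $x\in\{x_0,x_0+a\}$ then $D_aF(x)=D_aF(x_0)$, which would force $\epsilon=0$. Thus, provided $\epsilon\neq 0$, the conditions $(u+x_0)(v+x_0)\neq 0$ come for free, and the conjecture is equivalent to the covering statement
\[
\bigcup_{a\in\F_{2^n}^*}\bigl(D_aF(x_0)+\mathrm{Im}(D_aF)\bigr)\supseteq\F_{2^n}^*.
\]
This is a union of $2^n-1$ subsets of $\F_{2^n}$, each of cardinality $2^{n-1}$; the total counting $(2^n-1)\cdot 2^{n-1}$ is comfortably larger than $|\F_{2^n}^*|$, so on average each $\epsilon$ is hit with multiplicity $\approx 2^{n-1}$, but pigeonhole does not give the covering directly.

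A second, equivalent route passes through the fourth moment. Combining Theorem~\ref{4thpower}\,$(i)$ with identity~\eqref{Teq}, for APN $F$ we have
\[
\sum_{a,b\in\F_{2^n}}\bigl(\vwhtp{F'}{a}{b}{4}-\vwhtp{F}{a}{b}{4}\bigr)=4\cdot 2^{2n}\,|T_{x_0,y_1}|,
\]
and by Lemma~\ref{APN:char1}\,$(i)$ the left-hand side vanishes precisely when $F'$ is APN. Hence, under the APN hypothesis on $F$, the present strengthened conjecture is equivalent to the original conjecture of~\cite{CB18}: that no single-point modification of an APN function remains APN.

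The chief obstacle is that the original conjecture is itself open, so no universal proof seems within reach here. I would therefore proceed in stages. The natural first target is the quadratic APN case, where each $D_aF$ is an affine map and each shifted image $D_aF(x_0)+\mathrm{Im}(D_aF)$ is a full affine hyperplane of $\F_{2^n}$; the covering problem reduces to showing that $2^n-1$ such hyperplanes, whose linear parts are tightly constrained by APNness, cannot fail to cover $\F_{2^n}^*$. I would try to turn this into a character-sum estimate over the relevant hyperplanes, or to derive a contradiction from a missed value $\epsilon$ by coupling the fourth-moment identity of Lemma~\ref{APN:char1}\,$(i)$ with the third-moment identity of Lemma~\ref{APN:char1}\,$(ii)$ via Corollary~\ref{CorCond}. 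The fundamental difficulty is that APNness constrains each individual derivative $D_aF$ but not how these derivatives interact as $a$ varies, whereas the covering property demands exactly such a joint statement.
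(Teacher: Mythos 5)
The statement you set out to prove is not proved in the paper at all: it is an open conjecture, and the authors only remark, immediately after stating it, that it would follow from the set equality $\{F(x_0)+F(u)+F(v)+F(x_0+u+v) : u,v\in\F_{2^n}\}=\F_{2^n}$. Your reformulations are correct and essentially reproduce what the paper already records. Parametrizing $u=x$, $v=x+a$ does turn non-$x_0$-APNness of $F'$ into the membership $\epsilon\in\bigcup_{a\neq 0}\bigl(D_aF(x_0)+\mathrm{Im}(D_aF)\bigr)$, and your observation that $\epsilon\neq 0$ automatically rules out $x\in\{x_0,x_0+a\}$ is right; this is the paper's remark in derivative form. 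Likewise, your fourth-moment identity $\sum_{a,b}\bigl(\vwhtp{F'}{a}{b}{4}-\vwhtp{F}{a}{b}{4}\bigr)=2^{2n+2}\,\cardinality{T_{x_0,y_1}}$ for APN $F$ follows from Theorem~\ref{4thpower}$(i)$ together with \eqref{Teq}, and the resulting equivalence (for APN $F$) between this strengthened conjecture and the conjecture of~\cite{CB18} is exactly the content of \eqref{localglobal} and Theorem~\ref{th2.6}. So nothing in your write-up is wrong.

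The genuine gap is simply that the covering statement is never established, and you concede as much. The counting heuristic $(2^n-1)\cdot 2^{n-1}\gg 2^n-1$ cannot close it, because APNness constrains each derivative $D_aF$ individually (each is $2$-to-$1$) but says nothing about how the shifted images $D_aF(x_0)+\mathrm{Im}(D_aF)$ intersect as $a$ varies, which is precisely what a covering argument needs. Even your proposed first milestone, the quadratic case --- where each set $D_aF(x_0)+\mathrm{Im}(D_aF)$ is the linear hyperplane $\mathrm{Im}(L_a)$ given by the linear part $L_a$ of $D_aF$, so the question becomes whether these $2^n-1$ hyperplanes cover $\F_{2^n}^*$ --- is left as a plan: no character-sum estimate is carried out, and the suggested coupling of Lemma~\ref{APN:char1}$(i)$--$(ii)$ with Corollary~\ref{CorCond} is not executed. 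In short, what you have is a correct restatement of the open problem together with a research program, not a proof; since the target is a conjecture that the paper itself leaves open, that is an honest outcome, but it does not establish the statement.
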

One way of showing that this is true would be to show
 $\{F(x_0)+F(u)+F(v)+F(x_0+u+v) : u,v \in \F_{2^n}\}= \F_{2^n}.$
 Indeed, suppose that $F'$ is an $(x_0,\epsilon)$-modification of $F$ with $y_1=y_0+\epsilon \ne y_0 = F(x_0)$ and that $F'$ is not APN. This is true if and only if the equation
 $  F'(x_0) + F'(u) + F'(v) + F'(x_0+u+v) = 0 $
 is satisfied by a pair of elements $u,v \in \F_{2^n}$ with $(u+x_0)(v+x_0)(u+v) \ne 0$. Writing $\epsilon = y_0 + y_1$, this is equivalent to
$F(x_0) + F(u) + F(v) + F(x_0+u+v) = \epsilon $
 or, in other words,  $\epsilon \in \{ F(x_0) + F(u) + F(v) + F(x_0 + u + v) : u,v \in \F_{2^n} \}$. Thus, the difference $\epsilon$ between $F(x_0)$ and $F'(x_0)$ must not be expressible as $D_aF(x_0) + D_aF(y)$ in order for $F'$ to be $x_0$-APN.

 \begin{corollary} \label{WcubeE} Let $F$ be an $(n,n)$-function and let $F'$ be its $(x_0,\epsilon)$-modification for $x_0,y_0 \in \F_{2^n}$ with $\epsilon\neq 0$. Then,
\begin{align*}
  &\sum_{a,b\in \F_{2^n}}\left(\vwhtp{F}{a}{b}{3}-\vwhtp{F'}{a}{b}{3}\right)=3\cdot  2^{2n}\left(\cardinality{S_{x_0,y_0}}-\cardinality{S_{x_0,y_1}} \right)\\
 &-3\cdot 2^{2n+1} \left(\delta_0(F(0))- \delta_0(y_1-y_0+F(0))\right)+2^{2n+2} \delta_0(x_0) \left(\delta_0(y_0)-\delta_0(y_1) \right).
\end{align*}
Furthermore,
\begin{enumerate}
\item[$(a)$] If $F(0)=0\neq x_0$, then,\\
  $\displaystyle \sum_{a,b\in \F_{2^n}}\left(\vwhtp{F}{a}{b}{3}-\vwhtp{F'}{a}{b}{3}\right)
=3\cdot  2^{2n}\left(\cardinality{S_{x_0,y_0}}-\cardinality{S_{x_0,y_1}} \right)-3\cdot 2^{2n+1};$
\item[$(b)$] If $F(0)=0=x_0$, then\\
$\begin{array}{ll}\displaystyle
\sum_{a,b\in \F_{2^n}}\left(\vwhtp{F}{a}{b}{3}-\vwhtp{F'}{a}{b}{3}\right)&=3\cdot  2^{2n}\left(\cardinality{S_{x_0,y_0}}-\cardinality{S_{x_0,y_1}} \right)-2^{2n+1}\\
&=2^{2n+1}(3\cdot2^{n-1}-1);\end{array}
$
\item[$(c)$]
If $F$ is APN and  $F(0)=0\neq x_0$, then\\
$\allowdisplaybreaks\sum_{a,b\in\F_{2^n}}\vwhtp{F'}{a}{b}{3}=2^{2n+1}(3\cdot2^{n-1}-1)+3\cdot  2^{2n}\cardinality{S_{x_0,y_1}};$
\item[$(d)$] If $F$ is APN and $F(0)=0=x_0$, then
  $\displaystyle \sum_{a,b\in\F_{2^n}}\vwhtp{F'}{a}{b}{3}=0.$
\end{enumerate}
\end{corollary}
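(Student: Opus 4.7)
The plan is to obtain the main identity by direct substitution, combining Theorem~\ref{4thpower}$(ii)$ with Theorem~\ref{mainthm}$(ii)$. More precisely, Theorem~\ref{4thpower}$(ii)$ already expresses the third-moment difference $\sum_{a,b}\bigl(\vwhtp{F}{a}{b}{3}-\vwhtp{F'}{a}{b}{3}\bigr)$ in terms of the quantity $\sum_{a,b}\vwhtp{F}{a}{b}{2}E_F(a,b)$ together with the two Kronecker correction terms involving $\delta_0(F(0))$, $\delta_0(y_1-y_0+F(0))$, $\delta_0(x_0)$, $\delta_0(y_0)$, and $\delta_0(y_1)$. Theorem~\ref{mainthm}$(ii)$ identifies that sum with $2^{2n}\bigl(\cardinality{S_{x_0,y_0}}-\cardinality{S_{x_0,y_1}}\bigr)$. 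Substituting yields the asserted global formula with essentially no further computation.

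The four sub-cases are then routine specializations. For $(a)$, the assumption $F(0)=0$ makes $\delta_0(F(0))=1$, while $\epsilon\neq 0$ makes $\delta_0(y_1-y_0+F(0))=\delta_0(\epsilon)=0$; and $x_0\neq 0$ kills the $\delta_0(x_0)$ term. For $(b)$, we keep the $\delta_0(F(0))=1$, $\delta_0(\epsilon)=0$ contributions, but now $x_0=0$ forces $y_0=F(0)=0$ and $y_1=\epsilon\neq 0$, so $\delta_0(y_0)-\delta_0(y_1)=1$; the constants $-3\cdot 2^{2n+1}+2^{2n+2}=-2^{2n+1}$ match. The second equality in $(b)$ follows by observing that when $x_0=y_0=0$ the set $S_{0,0}=\{u:0=0\}=\F_{2^n}$ has size $2^n$, while $S_{0,y_1}=\emptyset$ since $y_1\neq 0$, so the main term contributes $3\cdot 2^{3n}$, and $3\cdot 2^{3n}-2^{2n+1}=2^{2n+1}(3\cdot 2^{n-1}-1)$.

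For $(c)$ and $(d)$, we additionally invoke Lemma~\ref{APN:char1}$(ii)$, which with $F(0)=0$ and $F$ APN gives $\sum_{a,b}\vwhtp{F}{a}{b}{3}=2^{2n+1}(3\cdot 2^{n-1}-1)$. Moreover, when $F$ is APN and $x_0\neq 0$, the two obvious solutions $u=0$ and $u=x_0$ of the equation $F(u+x_0)+F(u)=F(x_0)=y_0$ are the only ones, so $\cardinality{S_{x_0,y_0}}=2$; plugging this into $(a)$ rearranges to the formula in $(c)$. For $(d)$, formula $(b)$ combined with the APN value of $\sum\vwhtp{F}{a}{b}{3}$ immediately yields $\sum\vwhtp{F'}{a}{b}{3}=0$.

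There is essentially no real obstacle here: once Theorems~\ref{4thpower}$(ii)$ and~\ref{mainthm}$(ii)$ are in hand, the corollary is an arithmetic assembly. The only point to handle carefully is the bookkeeping of the Kronecker symbols in each sub-case (in particular remembering that $y_0=F(x_0)$, so $x_0=0$ together with $F(0)=0$ forces $y_0=0$ and thus $y_1=\epsilon\neq 0$), and the identification $\cardinality{S_{x_0,y_0}}=2$ in the APN case with $x_0\neq 0$, which is where Lemma~\ref{APN:char1} feeds in through the definition of differential uniformity.
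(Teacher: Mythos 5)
Your proposal is correct and follows essentially the same route as the paper: the main identity is obtained by substituting Theorem~\ref{mainthm}$(ii)$ into Theorem~\ref{4thpower}$(ii)$, the sub-cases $(a)$--$(b)$ come from the same Kronecker-symbol bookkeeping and the observation that $S_{0,0}=\F_{2^n}$ while $S_{0,y_1}=\emptyset$, and $(c)$--$(d)$ use Lemma~\ref{APN:char1} together with $\cardinality{S_{x_0,y_0}}=2$ exactly as in the paper. No gaps; the arithmetic checks out.
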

%\end{corollary}
\begin{proof}
The main claim, item $(a)$ and the first equation in $(b)$ follow easily from Theorem~\ref{4thpower}~$(ii)$ and Theorem~\ref{mainthm}~$(ii)$. For the second equation of $(b)$, we suppose $F(0)=0= x_0$. Then, $S_{x_0,y_0}=\{u\in\F_{2^n}|F(u)+F(u)+F(0)=0 \}=\F_{2^n}$, so $\cardinality{S_{x_0,y_0}}=2^n$. Also, $S_{x_0,y_1}=\{u\in\F_{2^n}|F(u)+F(u)+F'(0)=0 \}=\emptyset$, so $\cardinality{S_{x_0,y_1}}=0$.

  To show $(c)$, we assume that $F$ is APN with $F(0)=0\neq x_0$.  Then, $S_{x_0,y_0}=\{u\in\F_{2^n}|F(u)+F(u+x_0)+F(x_0)=0 \}=\{0,x_0\}$. By Lemma~\ref{APN:char1}, we get $\sum_{a,b\in\F_{2^n}}\vwhtp{F}{a}{b}{3}=2^{2n+1}(3\cdot2^{n-1}-1)$. From this and the main claim of this corollary, we have
\[
\displaystyle 2^{2n+1}(3\cdot2^{n-1}-1)-\sum_{a,b\in\F_{2^n}}\vwhtp{F'}{a}{b}{3}=3\cdot  2^{2n}\left(2-\cardinality{S_{x_0,y_1}} \right)-3\cdot 2^{2n+1},
\]
%that is,
%$$\sum_{a,b\in\F_{2^n}}W_{F'}^3(a,b)=2^{2n+1}(3\cdot2^{n-1}-1)-3\cdot  2^{2n}\left(2-|S_{x_0,y_1}| \right)+3\cdot 2^{2n+1},$$
and so,
\[
\displaystyle \sum_{a,b\in\F_{2^n}}\vwhtp{F'}{a}{b}{3}=2^{2n+1}(3\cdot2^{n-1}-1)+3\cdot  2^{2n}\cardinality{S_{x_0,y_1}}.
\]
%Since $F(0)=0$ and $x_0\neq0$, $F'(0)=0$. Suppose $F'$ is APN. Then, by lemma~\ref{APN:char1}, $\sum_{a,b}W_{F'}^3(a,b)=2^{2n+1}(3\cdot2^{n-1}-1)$. This, together with the previous equation, implies that $|S_{x_0,y_1}|=0$. Which is true.

To show $(d)$, we now suppose that $F$ is APN and $F(0)=0=x_0$. Then, by Lemma~\ref{APN:char1} and point $(b)$ of this corollary,%$S_{x_0,y_0}=\{u\in\F_{2^n}|F(u)+F(u+x_0)+F(x_0)=0 \}=\F_{2^n}$, so $|S_{x_0,y_0}|=2^n$. By lemma 1.1 and Corollary \ref{WcubeE},
\begin{align*}
\displaystyle
\sum_{a,b\in\F_{2^n}}\left(\vwhtp{F}{a}{b}{3}-\vwhtp{F'}{a}{b}{3}\right)
&=2^{2n+1}(3\cdot2^{n-1}-1)-\sum_{a,b\in\F_{2^n}}\vwhtp{F'}{a}{b}{3}\\
&=2^{2n+1}(3\cdot2^{n-1}-1),
\end{align*}
which implies that
$\displaystyle \sum_{a,b\in\F_{2^n}}\vwhtp{F'}{a}{b}{3}=0,$
and the claim is shown.
\end{proof}

Note that Corollary \ref{CorCond} can also be deduced from Theorem \ref{mainthm}. Furthermore, we can deduce the following corollary:
\begin{corollary} Let $F$ be an $(n,n)$-function. Let $x_0=0=y_0$, and $F'$ be the $(0,\epsilon)$-modification of $F$ for some $\epsilon \in \F_{2^n}^*$. Then,  $\sum_{a,b\in \F_{2^n}} \cW_F^2(a,b)(-1)^{\Tr(b\epsilon)}=0$.
\end{corollary}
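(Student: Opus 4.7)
The plan is to apply Theorem~\ref{mainthm}~$(ii)$ directly with the substitution $x_0=y_0=0$. Under this choice, the two sets in the theorem collapse dramatically: $S_{0,0}=\{u\in\F_{2^n} : F(u)+F(u)+0=0\}=\F_{2^n}$, so $\cardinality{S_{0,0}}=2^n$, while $S_{0,\epsilon}=\{u\in\F_{2^n} : F(u)+F(u)+\epsilon=0\}=\emptyset$ since $\epsilon\neq 0$. Hence the right-hand side of Theorem~\ref{mainthm}~$(ii)$ reduces to $2^{2n}\cdot(2^n-0)=2^{3n}$.

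Next, I would expand the factor $E_F(a,b)$ on the left. By definition, $E_F(a,b)=(-1)^{\Tr(ax_0+by_0)}(1-(-1)^{\Tr(b\epsilon)})$, which collapses to $1-(-1)^{\Tr(b\epsilon)}$ under $x_0=y_0=0$. Substituting into the theorem and splitting the resulting sum gives
$$\sum_{a,b\in\F_{2^n}} \cW_F^2(a,b) - \sum_{a,b\in\F_{2^n}} \cW_F^2(a,b)(-1)^{\Tr(b\epsilon)}=2^{3n}.$$
The first sum evaluates to $2^{3n}$ via Parseval's identity: for each fixed $b\in\F_{2^n}$ one has $\sum_a\cW_F^2(a,b)=2^{2n}$ (the case $b=0$ is still $2^{2n}$ since $\cW_F(a,0)=2^n\delta_0(a)$), so summing over $b$ yields $2^n\cdot 2^{2n}=2^{3n}$. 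Cancelling the two equal $2^{3n}$ terms forces
$$\sum_{a,b\in \F_{2^n}}\cW_F^2(a,b)(-1)^{\Tr(b\epsilon)}=0,$$
which is the desired identity.

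There is no genuine obstacle here: the whole argument is a specialization of Theorem~\ref{mainthm}~$(ii)$ followed by a routine Parseval bookkeeping. The only point that warrants care is ensuring the $b=0$ contribution is handled correctly in the Parseval step, which is immediate from the explicit formula $\cW_F(a,0)=2^n\delta_0(a)$.
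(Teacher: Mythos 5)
Your argument is correct and is essentially the paper's own proof: specialize Theorem~\ref{mainthm}~$(ii)$ to $x_0=y_0=0$ (so $\cardinality{S_{0,0}}=2^n$, $\cardinality{S_{0,\epsilon}}=0$, giving $2^{3n}$), expand $E_F(a,b)=1-(-1)^{\Tr(b\epsilon)}$, and cancel against Parseval's $\sum_{a,b}\cW_F^2(a,b)=2^{3n}$. Your extra remark about the $b=0$ term in the Parseval step is a harmless added detail; nothing else differs.
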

\begin{proof}
Using the notation of Theorem~\ref{mainthm}, $S_{0,0}=\F_2^n$, while $S_{0,\epsilon}=\emptyset$. Then, by Theorem~\ref{mainthm}, $\sum_{a,b\in \F_{2^n}} \cW_F^2(a,b)E_F(a,b)=2^{3n}$.
On the other hand, \begin{align*}
\sum_{a,b\in \F_{2^n}} \cW_F^2(a,b)E_F(b)&=\sum_{a,b\in \F_{2^n}} \cW_F^2(a,b)-\sum_{a,b\in \F_{2^n}} \cW_F^2(a,b)(-1)^{\Tr(b\epsilon)}\\
&=2^{3n}-\sum_{a,b\in \F_{2^n}} \cW_F^2(a,b)(-1)^{\Tr(b\epsilon)},
\end{align*}
which shows the corollary.
\end{proof}

%%%%%%%%%%%%%%%%%%%%%%%%%%%%%%%%%%%%%%%%%%%%%%%%%%%%%%%

\section{A characterization of partial APN functions}
\label{sec3}

We now provide a necessary and sufficient condition for a function to be $x_0$-APN. As a consequence of our theorem we can obtain the APN conditions of  Lemma~\ref{APN:char1}.
 %using
%\begin{align*}
%\sum_{a,b\in \F_{2^n}} \cW_F^4(a,b)&=\sum_{x_0\in \F_{2^n}}
%\sum_{a,b\in \F_{2^n}}
%\cW_F^3(a,b)(-1)^{\Tr(ax_0+bF(x_0))}\\
% &=2^n\cdot 2^{2n+1}(3\cdot
%2^{n-1}-1)=2^{3n+1}(3\cdot 2^{n-1}-1).
%\end{align*}

 \begin{theorem}
 Let $F$ be an $(n,n)$-function and $x_0\in\F_{2^n}$. Then $F$ is $x_0$-APN if and only if
 \[\displaystyle
\sum_{a,b\in \F_{2^n}}
\vwhtp{F}{a}{b}{3}(-1)^{\Tr(ax_0+bF(x_0))}=2^{2n+1}(3\cdot 2^{n-1}-1).
\]
 \end{theorem}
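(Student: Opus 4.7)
The plan is to observe that the theorem is essentially a direct reading of an identity already established in the discussion following Theorem~\ref{mainthm}, namely
\[
\sum_{a,b\in\F_{2^n}}\vwhtp{F}{a}{b}{3}(-1)^{\Tr(ax_0+by_0)} = 2^{2n}\bigl(3\cdot 2^n-2+\cardinality{T_{x_0,y_0}}\bigr),
\]
where $y_0=F(x_0)$. If this identity were not already in the paper, I would reprove it along the same route as Theorem~\ref{mainthm}(i): expand $\vwhtp{F}{a}{b}{3}$ as a triple sum over $(u,v,w)$, absorb the sum over $a$ via $\sum_{a}(-1)^{\Tr(a(u+v+w+x_0))}=2^n\delta_0(u+v+w+x_0)$ to force $w=u+v+x_0$, then use the sum over $b$ to force $F(u)+F(v)+F(u+v+x_0)+y_0=0$, and finally split the resulting pairs $(u,v)$ into those lying on $(u+x_0)(v+x_0)(u+v)=0$ (which number $3\cdot 2^n-2$) and the remaining pairs, which contribute $\cardinality{T_{x_0,y_0}}$.

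Next I would pin down the arithmetic: $2^{2n}(3\cdot 2^n-2)=2^{2n+1}(3\cdot 2^{n-1}-1)$. Hence the claimed equation
\[
\sum_{a,b\in\F_{2^n}}\vwhtp{F}{a}{b}{3}(-1)^{\Tr(ax_0+bF(x_0))}=2^{2n+1}(3\cdot 2^{n-1}-1)
\]
is equivalent to $\cardinality{T_{x_0,F(x_0)}}=0$, that is, to $T_{x_0,F(x_0)}=\emptyset$.

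To close the loop I would unwind the definition of $x_0$-APN: $F$ is $x_0$-APN if and only if every pair $(u,v)$ with $F(x_0)+F(u)+F(v)+F(x_0+u+v)=0$ lies on the curve $(x_0+u)(x_0+v)(u+v)=0$. Contrapositively, this says no pair $(u,v)$ off that curve satisfies the equation, which is precisely $T_{x_0,F(x_0)}=\emptyset$. Combined with the previous paragraph this gives the desired equivalence. Since the heavy lifting—the cubic Walsh identity in terms of $\cardinality{T_{x_0,y_0}}$—is already present in the paper's treatment of Theorem~\ref{mainthm}, there is no genuine obstacle; the only conceptual step is recognizing that the constant $2^{2n+1}(3\cdot 2^{n-1}-1)$ is exactly the value of the Walsh sum in the ``no off-curve solutions'' case.
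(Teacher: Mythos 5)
Your proposal is correct and follows essentially the same route as the paper: the paper's proof expands $\vwhtp{F}{a}{b}{3}$ as a triple sum, uses the sum over $a$ to force $w=u+v+x_0$ and the sum over $b$ to count solutions of $F(u)+F(v)+F(x_0)+F(u+v+x_0)=0$, splits these into the $3\cdot 2^n-2$ curve pairs plus $\cardinality{T_{x_0,y_0}}$, and concludes exactly as you do (indeed, the identity you cite is stated explicitly right after Theorem~\ref{mainthm}). No gaps; your arithmetic observation $2^{2n}(3\cdot 2^n-2)=2^{2n+1}(3\cdot 2^{n-1}-1)$ and the emptiness criterion for $T_{x_0,F(x_0)}$ match the paper's closing step.
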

 \begin{proof}
 We have
 \allowdisplaybreaks
 \begin{align*}
   &\sum_{a,b\in \F_{2^n}} \vwhtp{F}{a}{b}{3}(-1)^{\Tr(ax_0+bF(x_0))}\\
 &=  \sum_{a,b\in\F_{2^n}} (-1)^{\Tr(ax_0+bF(x_0))} \sum_{u,v,w \in \F_{2^n}} (-1)^{\Tr(b(F(u)+F(v)+F(w))+a(u+v+w))}\\
 &=   \sum_{b,u,v,w \in \F_{2^n}} (-1)^{\Tr(b(F(u)+F(v)+F(w)+F(x_0)))} \sum_{a\in\F_{2^n}}(-1)^{\Tr(a(u+v+w+x_0))} \\
 &=2^n  \sum_{b,u,v \in \F_{2^n}} (-1)^{\Tr(b(F(u)+F(v)+F(x_0)+F(u+v+x_0)))} \\
 &= 2^n \sum_{u,v \in \F_{2^n}} \sum_{b\in \F_{2^n}}(-1)^{\Tr(b(F(u)+F(v)+F(x_0)+F(u+v+x_0)))} \\
 &= 2^{2n} \cardinality{ \{ (u,v) \in \F_{2^n}^2 : F(u) + F(v) + F(x_0) + F(u+v+x_0) = 0 \} } \\
 &= 2^{2n} \left( 3 \cdot 2^n - 2 + \cardinality{T_{x_0, y_0}} \right).
 \end{align*}
Since $T_{x_0, y_0}$ is empty if and only if $F$ is $x_0$-APN, the claim follows.
 \end{proof}

%%%%%%%%%%%%%%%%%%%%%%%%%%%%%%%%%%%%%%%%%%%%%%%%%%%%%%%

 \section{Monomial partial APN functions}
 \label{sec4}

For a monomial $F(x)=x^m$, the polynomial $G(x,y,z)=F(x)+F(y)+F(z)+F(x+y+z)$ is a
symmetric homogeneous polynomial of degree $m$, and so, $G(kx,ky,kz)=k^mG(x,y,z)$ for all $k\in \F_{2^n}$. Using this property, we show that a monomial $F$ is APN if and only if $F$
is partial APN on a subspace of dimension $1$ (that is, it is partial APN at $0$ and some $x_0\neq 0$).

\begin{proposition}\label{Prop-power}
Let $F(x)=x^m$ over $\mathbb{F}_{2^n}$. Then:
\begin{enumerate}[$(i)$]
\item If $x_0\neq0$, then $F$ is $x_0$-APN if and only if $F$ is $x_1$-APN for all $x_1 \in \F_{2^n}^*$;
 \item  $F$ is APN if and
   only if $F$ is $0$-APN and $x_1$-APN for some $x_1 \in \F_{2^n}^*$.
\end{enumerate}
\end{proposition}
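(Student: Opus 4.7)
The plan is to exploit the homogeneity of $G(x,y,z) = F(x)+F(y)+F(z)+F(x+y+z)$ noted in the lead-in to the proposition, namely $G(kx,ky,kz) = k^m G(x,y,z)$ for all $k \in \F_{2^n}$. Recalling from Theorem~\ref{mainthm} (or from the definition directly) that $F$ is $x_0$-APN precisely when the set
\[
T_{x_0,F(x_0)} = \{(u,v) \in \F_{2^n}^2 : G(x_0,u,v)=0, \ (x_0+u)(x_0+v)(u+v) \neq 0 \}
\]
is empty, the whole argument will reduce to constructing, for any pair $x_0, x_1 \in \F_{2^n}^*$, a bijection between $T_{x_0,F(x_0)}$ and $T_{x_1,F(x_1)}$.

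For part $(i)$, suppose $F$ is $x_0$-APN with $x_0 \neq 0$ and let $x_1 \in \F_{2^n}^*$ be arbitrary. Set $k = x_1 x_0^{-1} \in \F_{2^n}^*$. The map $(u,v)\mapsto (ku,kv)$ is a bijection of $\F_{2^n}^2$, and since $G$ is $m$-homogeneous,
\[
G(x_1, ku, kv) = G(kx_0, ku, kv) = k^m G(x_0,u,v),
\]
so $G(x_1, ku, kv) = 0 \Longleftrightarrow G(x_0,u,v) = 0$. Likewise $(x_0+u)(x_0+v)(u+v)\neq 0$ if and only if $(x_1+ku)(x_1+kv)(ku+kv)\neq 0$, because each factor is just scaled by $k$. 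Hence the scaling $(u,v)\mapsto (ku,kv)$ restricts to a bijection $T_{x_0,F(x_0)} \to T_{x_1,F(x_1)}$, and emptiness transfers. This gives $(i)$.

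Part $(ii)$ is then an immediate consequence. By the Rodier condition (Lemma~\ref{APN:char1}$(iii)$), $F$ is APN if and only if $F$ is $x$-APN for every $x \in \F_{2^n}$; the ``only if'' direction of $(ii)$ is trivial. For the ``if'' direction, assuming $F$ is $0$-APN and $x_1$-APN for some $x_1 \in \F_{2^n}^*$, part $(i)$ promotes $x_1$-APNness to $x$-APNness for every nonzero $x$, which together with $0$-APNness yields APNness. There is no real obstacle here beyond keeping the scaling bookkeeping straight; the whole proof is driven by the single observation that the symmetric polynomial $G$ is homogeneous of degree $m$.
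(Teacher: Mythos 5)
Your proof is correct and follows essentially the same route as the paper: both rest on the $m$-homogeneity of $G(x,y,z)=F(x)+F(y)+F(z)+F(x+y+z)$ and the scaling by $k=x_1x_0^{-1}$, the only (cosmetic) difference being that you phrase it as a bijection between $T_{x_0,F(x_0)}$ and $T_{x_1,F(x_1)}$ while the paper argues by contradiction, and you spell out the deduction of $(ii)$ from $(i)$ via the Rodier condition, which the paper leaves implicit.
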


\begin{proof}
Certainly, $(ii)$ is a consequence of $(i)$. To show the first claim, it will be enough to show the necessity part only.
Now, we assume that $F$ is $x_0$-APN, that is  $G(x_0, y, z) \neq 0$
for all $y, z$ with $(y+x_0)(z+x_0)(y+z) \neq 0$, and we want to show that $F$ is $x_1$-APN for any other $x_1\in \F_{2^n}^*$. By absurd, we assume that there is some $x_1\neq 0$, for which $F$ is not $x_1$-APN. Then, there exist $x_1\neq y_1\neq z_1\neq x_1$ such that $G(x_1, y_1, z_1)=0$.  Using the
homogeneous property of $G$, namely $0=k^mG(x_1, y_1, z_1) =G(kx_1, ky_1, kz_1)$
 for any $k\neq 0$, and taking $k=x_0/x_1\neq 0$, then the condition can be written as  $G(x0, y, z) = 0$ for $y=ky_1,z=kz_1$ and $y, z$ with $(y+x_0)(z+x_0)(y+z) \neq
 0$, and that is a contradiction.
\end{proof}

A partial APN concept on $(n,n)$-functions is also considered in \cite{CK17}:  $F$ is said to satisfy the property $(p_{a})$, $a\in\F_{2^n}^*$, if the equation $F(x)+F(x+a)=b$ has either 0 or 2 solutions for every $b\in\F_{2^n}$. They showed that a mapping $F$ is APN if and only if $F$ satisfies $(p_{a})$ for all nonzero $a$ belonging to a hyperplane. It is not clear if such a result is true in general for our notion of partial APNness. From the result above, we see that a similar result is true for monomials, i.e. $F$ is APN if and only if it is partial APN for a subspace of dimension $1$.
Moreover, when $F$ is a monomial, the property $(p_1)$ implies the property $(p_a)$ for
any $a\neq 0$. Therefore our result on $0$-APN has some analogy with the
 property $(p_1)$, but $0$-APN is a more general condition than the
  property $(p_1)$, as the following examples will show.

We let $\binom{a}{b}_2$ denote the residue modulo $2$ of the binomial coefficient $\binom{a}{b}$.
We next investigate and explicitly construct many classes of Boolean functions that are $0$-APN (but not necessarily APN).

\begin{theorem}
\label{xm-0APN}
Let $\F_{2^n}$ be the extension field of $\F_2$ corresponding to the primitive polynomial $f$ of degree $n$ and let $g$ be one of the (primitive) roots of $f$. Then:
\begin{enumerate}[$(i)$]
  \item[$(i)$] if $F(x)=x^m$ over $\F_{2^n}$, then $F$ is $0$-APN if and only if for $1\leq i\leq 2^n-1$, the minimal polynomial $P_{g^i}(x)=\prod_{j\in C_i} (x-g^j)$ of $g^i$, where $C_i=\{(i\cdot 2^j)\pmod{2^n-1} : j=0,1,\ldots \}$ is the unique cyclotomic coset of $i$ modulo $2^n-1$, does not divide $\sum_{k=1}^{mi-1} \binom{mi}{k}_2\ x^{mi-k-1}$;
\item[$(ii)$] if $F(x)=x^{2^d-1}$ over $\F_{2^n}$, then $F$ is  $0$-APN if and only if $\gcd(d-1, n)=1$;
\item[$(iii)$] if $F(x)=x^{2^d+1}$ (Gold exponent) over $\F_{2^n}$,  then $F$ is $0$-APN if and only if $\gcd(d, n)=1$.
\end{enumerate}
\end{theorem}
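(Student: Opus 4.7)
For a monomial $F(x)=x^m$ with $F(0)=0$, the $0$-APN condition says: the equation $u^m + v^m + (u+v)^m = 0$ has no solution with $uv(u+v) \neq 0$. Homogeneity of degree $m$ lets us normalise $u=1$, $v=y$, so the criterion becomes
\begin{equation*}
P(y) := 1 + y^m + (1+y)^m \neq 0 \quad\text{for every } y \in \F_{2^n} \setminus \{0,1\}.
\end{equation*}
Expanding $(1+y)^m$ by the binomial theorem in characteristic $2$, the extremal terms $\binom{m}{0}_2 = \binom{m}{m}_2 = 1$ cancel the outer $1$ and $y^m$, leaving $P(y) = \sum_{k=1}^{m-1}\binom{m}{k}_2 y^k$. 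Both $y=0$ and $y=1$ are automatic roots of $P$, so the question reduces to whether $P$ has any further roots in $\F_{2^n}$.

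For part (i), factor out $y$ and (equivalently, via the bijection $y \mapsto 1/y$ of $\F_{2^n}^*$) pass to the reciprocal polynomial $\sum_{k=1}^{m-1}\binom{m}{k}_2 y^{m-k-1}$, which has the same nonzero roots. Since this polynomial lies in $\F_2[y]$, its root set in $\bar{\F}_2$ is closed under Frobenius, so $g^i$ is a root iff the minimal polynomial $P_{g^i}(x)$ of $g^i$ over $\F_2$ divides it. Hence $F$ is $0$-APN iff this divisibility fails for every $i$ with $g^i \neq 1$.

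For part (iii), take $m = 2^d + 1$. The Frobenius identity $(1+y)^{2^d} = 1 + y^{2^d}$ gives
\begin{equation*}
P(y) = 1 + y^{2^d+1} + (1 + y^{2^d})(1+y) = y + y^{2^d} = y(1 + y^{2^d-1}),
\end{equation*}
so for $y \neq 0$, $P(y) = 0$ iff $y^{2^d-1} = 1$. The number of such $y$ in $\F_{2^n}^*$ equals $\gcd(2^d-1, 2^n-1) = 2^{\gcd(d,n)} - 1$, which is $1$ precisely when $\gcd(d,n)=1$.

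For part (ii), take $m = 2^d - 1$. Writing $(1+y)^{2^d-1} = (1+y^{2^d})/(1+y)$ and multiplying $P(y)$ by $y(1+y)$ yields
\begin{equation*}
y(1+y)P(y) = y(1+y) + y^{2^d}(1+y) + y(1+y^{2^d}) = y^2 + y^{2^d}.
\end{equation*}
Hence, for $y\neq 0,1$, $P(y)=0$ iff $y^{2^d-2}=1$; since $2^d - 2 = 2(2^{d-1}-1)$ and squaring is injective in characteristic $2$, this is equivalent to $y^{2^{d-1}-1}=1$, admitting $2^{\gcd(d-1,n)}-1$ solutions in $\F_{2^n}^*$. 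Only $y=1$ occurs iff $\gcd(d-1,n)=1$. The principal conceptual step is the initial homogenisation, which crucially uses that $F$ is a monomial; after that, (i) is a routine translation between roots and minimal-polynomial divisibility, while (ii) and (iii) rely only on the classical identity $\gcd(2^a-1,2^b-1)=2^{\gcd(a,b)}-1$.
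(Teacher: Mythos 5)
Your proof is correct and follows essentially the same route as the paper: normalise by homogeneity, reduce $0$-APNness to the non-existence of roots $y\neq 0,1$ of $1+y^m+(1+y)^m$, translate this into minimal-polynomial divisibility for $(i)$, and use the Frobenius identities together with $\gcd(2^a-1,2^b-1)=2^{\gcd(a,b)}-1$ for $(ii)$ and $(iii)$. Note only that in $(i)$ you (rightly) work with the $i$-independent polynomial $\sum_{k=1}^{m-1}\binom{m}{k}_2\,x^{m-k-1}$ and exclude the index with $g^i=1$, which is the evidently intended reading of the statement (the printed exponent $mi$ and the full range $1\le i\le 2^n-1$ in the theorem are slips, as the paper's own proof shows).
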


\begin{proof}
If $F(x)=x^m$, then $F$ is $0$-APN if and only if the Rodier equation
\[
F(y)+F(z)+F(y+z)=y^m+z^m+(y+z)^m=0,
\]
has no solution $y,z \in \F_{2^n}^*$ with $y \ne z$. Given two distinct elements $y,z \in \F_{2^n}^*$, let $z=y\alpha$, where $\alpha\neq 0,1$.
Then, the equation above becomes
\[
y^m \left(1+\alpha^m  +(1+\alpha)^m\right)=0,
\]
implying $
1+\alpha^m  +(1+\alpha)^m=0.
$
 Then, if there exists $\alpha\neq 0,1$ satisfying the previous equation,
 %then $\displaystyle f(x)\,\big|\, \frac{1+x^{im}+(1+x^i)^{m}}{x}=\sum_{k=1}^{2^n-1} \binom{mi}{k}_2\ x^{mi-k-1}$, for some $1\leq i\leq 2^n-1$. If $1+x^{m}+(1+x)^{m}$
 then there exists $1\leq i\leq 2^n-1$ such that
 \[
 %\displaystyle  \frac{1+x^{im}+(1+x^i)^{m}}{x}=\sum_{k=1}^{mi-1} \binom{mi}{k}_2\ x^{mi-k-1}
	 \displaystyle  \frac{1+x^{im}+(1+x^i)^{m}}{x}=\sum_{k=1}^{m-1} \binom{m}{k}_2\ x^{i(m-k)-1}
  \]
  vanishes at $g$, that is, $1+g^{i\,m}+\left(1+g^{i} \right)^m=0$. Then it will vanish at $g^{2^\ell}$, for all $\ell$, since $1+g^{i\,m\, 2^\ell}+\left(1+g^{i\, 2^\ell}\right)^m=\left(1+g^{i\,m}+\left(1+g^{i} \right)\right)^{2^\ell}=0$. Thus, the minimal polynomial $P_{g^i}(x)=\prod_{j\in C_i} (x-g^j)$ of $g^i$ divides $\displaystyle \sum_{k=1}^{mi-1} \binom{mi}{k}_2\ x^{mi-k-1}$. The converse is certainly true, and the first claim is shown.

To test whether $F=x^{2^d-1}$ is $0$-APN, one needs to check the (in)solvability of the Rodier equation
\begin{align*}\displaystyle
0&=F(y)+F(z)+F(y+z)\\
&=y^{2^d-1}+z^{2^d-1}+(y+z)^{2^d-1}\\
&=\frac{zy^{2^d-1}+yz^{2^d-1}}{y+z}=\frac{(\alpha^{2^d-1}+\alpha)z^{2^d}}{z(\alpha + 1)},
\end{align*}
where $y=z\alpha, \alpha\neq 0,1$. Therefore, when (and only when) $\gcd(2^d-2, 2^n-1)=1$, there is no $\alpha\neq 0,
1$ satisfying the above equation, that is, $x^{2^d-1}$ is $0$-APN. The condition $\gcd(d-1,n)=1$ follows form the known
identity $\gcd(2^a-1,2^b-1)=2^{\gcd(a,b)}-1$, since $1=\gcd(2^d-2, 2^n-1)=\gcd(2^{d-1}-1, 2^n-1)=2^{\gcd(d-1,n)}-1$.

In the same way, we consider $F(x)=x^{2^d+1}$ over $\F_{2^n}$. To test
whether $F$ is $0$-APN, one needs to check the solvability of the Rodier equation
\begin{align*}
0&=F(y)+F(z)+F(y+z)\\
&=y^{2^d+1}+z^{2^d+1}+(y+z)^{2^d+1}  \\
&=zy^{2^d}+yz^{2^d}=(\alpha^{2^d}+\alpha)z^{2^d+1},
\end{align*}
 where $y = z\alpha, \alpha \ne 0,1$. Therefore, when (and only when) $1=\gcd(2^d-1, 2^n-1)=2^{\gcd(d,n)}-1$, that is, for $\gcd(d,n)=1$, there is no $\alpha\neq 0, 1$ satisfying the above equation, so $x^{2^d+1}$ is $0$-APN.
\end{proof}

\begin{table}[htb]
  \centering
  \begin{tabular}{|c|l|c|}
    \hline
    $n$ & Exponents $i$ & $\Delta_F$ \\
    \hline
    \hline
    1-5 & - & - \\
    \hline
    6 & 27 & 12 \\
    \hline
  \multirow{2}{*}{7} & 7,21,31,55 & 6 \\
   & 19,47 & 4 \\
   \hline
   \multirow{4}{*}{8} & 15,45 & 14 \\
   & 21,111 & 4 \\
   & 51 & 50 \\
   & 63 & 6 \\
   \hline
   \multirow{3}{*}{9} & 7,21,35,61,63,83,91,111,117,119,175 & 6 \\
   & 41,187 & 8 \\
   & 45,125 & 4 \\
   \hline
   \multirow{6}{*}{10} & 15,27,45,75,111,117,147,189,207,255 & 6 \\
   & 21,69,87,237,375 & 4 \\
   & 51 & 8 \\
   & 93 & 92 \\
   & 105,351 & 10 \\
   & 231,363,495 & 42 \\
   & 447 & 12 \\
   \hline
  \end{tabular}
  \caption{Power functions $F(x) = x^i$ over $\F_{2^n}$ for $1 \le n \le 10$ that are 0-APN but not APN}
  \label{table0APNbutnot1APN}
\end{table}
\begin{example}
 % There are many power functions $F(x) = x^i$ over $\F_{2^n}$ that are $0$-APN but not APN.
 Table~\textup{\ref{table0APNbutnot1APN}} lists the exponents $i$ for which $x^i$ is 0-APN but not APN over $\F_{2^n}$ for $1 \le n \le 10$. Only one representative from every cyclotomic coset is given. There are no functions of this type for  $n \le 5$.
 
% We also verified that there are no power functions $F(x) = x^i$ over $\mathbb{F}_{2^n}$, $n \le 15$, which are $1$-APN but not $0$-APN, suggesting  that  $1$-APN-ness implies $0$-APN-ness for power functions. 
While there are power functions that are partial 0-APN but not APN, this is not true for partial 1-APN power functions. The proof is, in fact, rather immediate (we thank Dr. Namhun Koo for providing the included short proof here).

 \begin{theorem}
   Any partial $1$-APN power function $F(z)=z^k$ is APN.
 \end{theorem}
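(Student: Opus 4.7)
The plan is to reduce the claim, via Proposition~\ref{Prop-power}(ii), to showing that a $1$-APN power function must also be $0$-APN. That proposition states $F$ is APN if and only if $F$ is both $0$-APN and $x_1$-APN for some $x_1 \in \F_{2^n}^*$; the hypothesis supplies $x_1 = 1$, so only $0$-APNness remains to be established.

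I would argue by contrapositive. Suppose $F(z) = z^k$ is not $0$-APN. Then, as in the derivation preceding Theorem~\ref{xm-0APN}, there exists $\alpha \in \F_{2^n} \setminus \{0,1\}$ satisfying
\[
1 + \alpha^k + (1+\alpha)^k = 0.
\]
The crucial step is to feed this $\alpha$ back into the $1$-APN condition via the substitution $u = \alpha$, $v = 1 + \alpha$. One checks that $1 + u = 1 + \alpha$, $1 + v = \alpha$, and $u + v = 1$ are all nonzero (using $\alpha \neq 0, 1$), so the genericity requirement $(1+u)(1+v)(u+v) \neq 0$ in the definition of $1$-APN is met. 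On the other hand $1 + u + v = 0$, whence $F(1+u+v) = 0$, and therefore
\[
F(1) + F(u) + F(v) + F(1+u+v) = 1 + \alpha^k + (1+\alpha)^k + 0 = 0,
\]
contradicting the assumption that $F$ is $1$-APN. Hence $F$ is $0$-APN and, by Proposition~\ref{Prop-power}(ii), APN.

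No substantial obstacle is anticipated: the entire proof rests on the single observation that a $0$-APN defect at $\alpha$ automatically produces the triple $(1, \alpha, 1+\alpha)$ whose components sum to zero, which collapses the four-term Rodier relation defining $1$-APN failure into the three-term relation encoding $0$-APN failure.
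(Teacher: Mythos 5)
Your proof is correct and follows essentially the same route as the paper: reduce via Proposition~\ref{Prop-power} to showing $0$-APNness, then normalize a $0$-APN counterexample by division (homogeneity) to get $1+\alpha^k+(1+\alpha)^k=0$ and feed the triple $(1,\alpha,1+\alpha)$ into the $1$-APN condition to reach a contradiction. The paper phrases the contradiction as $(a+1)(b+1)(a+b)=(\tfrac{y}{x}+1)\cdot\tfrac{y}{x}\cdot 1\neq 0$ with $a=\tfrac{y}{x}$, $b=1+\tfrac{y}{x}$, which is exactly your choice $u=\alpha$, $v=1+\alpha$.
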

 \begin{proof}
 By proposition \ref{Prop-power}, it will be sufficient to show that $f$ is $0$-APN.
 Suppose, on the contrary, that $F(z)=z^k$ is not $0$-APN. Then
there exist $x, y\in \F_{2^n}$ with $xy(x+y)\neq 0$ satisfying
$F(0)+F(x)+F(y)+F(x+y)=0$. Since $x\neq 0$,
\begin{align}
0&=F(x)+F(y)+F(x+y)=x^k+y^k+(x+y)^k \notag \\
 &=1+(y/x)^k+(1+y/x)^k=F(1)+F(y/x)+F(1+y/x) \notag \\
 &=F(1)+F(a)+F(b)+F(1+a+b), \notag
\end{align}
where
$a=\frac{y}{x}$, $b=1+\frac{y}{x}$, $1+a+b=0$.
Since $F$ is $1$-APN, one must have $(a+1)(b+1)(a+b)=0$. However,
\begin{align}
0=(a+1)(b+1)(a+b) 
 =\left(\frac{y}{x}+1\right)\cdot \frac{y}{x} \cdot 1. \notag
\end{align}
Thus we get $x=y$ or $y=0$, contradicting the fact $xy(x+y)\neq 0$.
 \end{proof}

  This is not true in general, for non-monomials: we found over six million polynomials over $\mathbb{F}_{2^3}$ that are $1$-APN but not APN,  for example, $x^7 + x^6$. Out of these, $64$ have coefficients in $\F_2$: $48$ of them have the differential spectrum $\{ 0^{31}, 2^{22}, 4^3 \}$, while the remaining $16$ have the spectrum $\{ 0^{42}, 2^{7}, 6^{7} \}$.
We also found $6944$ polynomials of this type over $\F_{2^4}$ with coefficients in $\F_2$, for example, $x^{12} + x^7$.

Nonetheless, it seems likely that if some $(n,n)$-function $F$ is $x$-APN for all $x \in \mathbb{F}_{2^n}\setminus\{x_0\}$, then it is $x_0$-APN (and hence APN) as well. This can be easily observed to be true for quadratic functions. Recall that $F$ is $x_0$-APN if for any $a \ne 0$ the equation $F(x_0) + F(x) + F(x + a) + F(x_0 + a) = D_aF(x) + D_aF(x_0) = 0$ has precisely two solutions, namely, $x = x_0$ and $x = x_0 + a$. Since $D_aF$ is an affine function, this is equivalent to $D_aF(x + x_0) = D_aF(0)$ having only $x = x_0$ and $x = a + x_0$ as solutions.

\begin{proposition}
	Let $F$ be a quadratic $(n,n)$-function and $x_0 \in \mathbb{F}_{2^n}$. Then $F$ is $x_0$-APN if and only if $F$ is APN.
\end{proposition}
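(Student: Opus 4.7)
The plan is to use the reformulation sketched in the paragraph immediately preceding the proposition and push it one step further, exploiting the fact that for quadratic $F$ the derivative $D_aF$ is affine, not just some arbitrary function.

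First I would unfold the $x_0$-APN condition exactly as the authors do: $F$ is $x_0$-APN iff for every $a\neq 0$ the equation
\[
D_aF(x)+D_aF(x_0)=0
\]
has precisely the two solutions $x=x_0$ and $x=x_0+a$. The key observation, which I would state explicitly, is that because $F$ is quadratic, for each fixed $a$ the map $D_aF$ is affine over $\mathbb{F}_2$. Writing $D_aF(x)=L_a(x)+D_aF(0)$ with $L_a$ the $\mathbb{F}_2$-linear part, the equation $D_aF(x)=D_aF(x_0)$ becomes $L_a(x+x_0)=0$, so its solution set is the coset $x_0+\ker L_a$.

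Thus the $x_0$-APN condition is equivalent to $\ker L_a = \{0,a\}$ for every $a\neq 0$. Since $L_a(a)=D_aF(a)+D_aF(0)=F(0)+F(a)+F(a)+F(0)=0$, the element $a$ is always in $\ker L_a$, so the nontrivial content of the condition is just that $|\ker L_a|=2$. Crucially, this condition makes no reference to $x_0$ at all. Exactly the same reasoning applied to the full APN condition (that $D_aF(x)=b$ has at most two solutions for every $a\neq 0$ and every $b$) shows that APNess is also equivalent to $\ker L_a=\{0,a\}$ for every $a\neq 0$, since the fibers of an affine map are either empty or cosets of the same kernel.

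Therefore both notions collapse to the identical statement about the kernels $\ker L_a$, and the two conditions are equivalent. I do not anticipate any real obstacle here; the proof is essentially a one-line consequence of the affineness of derivatives of quadratic functions, and most of the writing consists of making the $x_0$-independence of the kernel condition completely explicit.
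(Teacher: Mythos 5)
Your argument is correct and is essentially the paper's own: the authors prove the proposition via the same observation that $D_aF$ is affine for quadratic $F$, so the solution set of $D_aF(x)=D_aF(x_0)$ is the coset $x_0+\ker L_a$ and the condition $\ker L_a=\{0,a\}$ is independent of $x_0$ and equivalent to APNness. Your write-up merely makes the kernel bookkeeping a bit more explicit than the paper's short preceding paragraph, but there is no substantive difference.
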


\end{example}

\section{Classes of never  $0$-APN (hence never APN) for infinitely many extensions of $\F_2$}
\label{sec5}

Building up on some of their earlier work on the function $x^3+\Tr(x^9)$, which is APN on $\F_{2^n}$, for all dimensions $n$, Budaghyan et al.~\cite{BCL09} generalized this class to  $L_1(x^3)+L_2(x^9)$, where $L_1,L_2$ are linear functions  on $\F_{2^n}$, and found conditions under which this function is APN.

In a series of papers, Rodier and his collaborators~\cite{AMR10,FOR12,LR11,Rod11,Rod09}
concentrated on finding classes of functions that are never APN for infinitely many extensions of the prime field $\F_2$. Here we present classes of functions that are never $0$-APN (and hence never APN) for infinitely many extensions of $\F_2$, and in the process even extend some of the existing results.

%We here look at some quadratics, cubics, etc., and find necessary conditions   when they are not partially APN (hence not APN).

\begin{theorem}
  Let $L$ be a linear polynomial on $\F_{2^n}$, $g$ be a primitive element of $\F_{2^n}$ and $d \ge 1$ be a positive integer. Furthermore, let $F$ and $G$ be defined over $\F_{2^n}$ by $F(x)=L\left(x^{2^d+1}\right)+\Tr(x^{3})$ and $G(x)=L\left(x^{2^{d+1}+2^d+1}\right)+\Tr(x^{3})$.
If $\gcd(d,n)>1$, then neither $F$ nor $G$ is $0$-APN.

In general, $L\left(x^m\right)+\Tr(x^3)$ is not $0$-APN if there exists some $1\leq i\leq 2^n-1$, such that $P_{g^i}(x)=\prod_{j\in C_i} (x-g^j)$ divides $\displaystyle \sum_{k=1}^{m-1} \binom{m}{k}_2\ x^{i(m-k)-1}$, where $C_i=\{(i\cdot 2^j)\pmod{2^n-1}\,|\, j=0,1,\ldots \}$ is the unique cyclotomic coset of $i$ modulo $2^n-1$.
\end{theorem}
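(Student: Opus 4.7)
The plan is to expand $F(0)+F(y)+F(z)+F(y+z)$ for $F(x)=L(x^m)+\Tr(x^3)$ using the $\F_2$-linearity of $L$ and $\Tr$, and then exploit a fortunate cancellation of the trace term at the test point $(y,z)=(1,\alpha)$. Concretely, I would first observe that
\[
F(0)+F(y)+F(z)+F(y+z) = L\bigl(y^m+z^m+(y+z)^m\bigr) + \Tr\bigl(yz(y+z)\bigr),
\]
having used $y^3+z^3+(y+z)^3=yz(y+z)$ in characteristic two. Setting $(y,z)=(1,\alpha)$ with $\alpha\in\F_{2^n}\setminus\F_2$ gives $yz(y+z)=\alpha+\alpha^2$ and hence $\Tr(\alpha+\alpha^2)=\Tr(\alpha)+\Tr(\alpha)=0$, so the trace summand vanishes unconditionally; moreover the off-curve condition reduces to $\alpha(1+\alpha)\ne 0$, i.e.\ $\alpha\ne 0,1$.

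Consequently, non-$0$-APNness reduces to producing $\alpha\in\F_{2^n}\setminus\F_2$ with $1+\alpha^m+(1+\alpha)^m=0$. For the general claim, I would note that by the same reasoning used in the proof of Theorem~\ref{xm-0APN}(i), the existence of such $\alpha$ is equivalent to the existence of $1\le i\le 2^n-1$ for which $P_{g^i}$ divides $\sum_{k=1}^{m-1}\binom{m}{k}_2 x^{i(m-k)-1}$ (take $\alpha=g^i$). Given this, the pair $(1,\alpha)$ is a witness to non-$0$-APNness of $L(x^m)+\Tr(x^3)$.

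It remains to verify that $\gcd(d,n)>1$ suffices for $F$ and $G$. For $F$ with $m=2^d+1$, a Frobenius expansion gives $1+\alpha^m+(1+\alpha)^m=\alpha+\alpha^{2^d}=\alpha(1+\alpha^{2^d-1})$, which vanishes precisely on the subgroup of $\F_{2^n}^*$ of order $\gcd(2^d-1,2^n-1)=2^{\gcd(d,n)}-1$; this subgroup contains an element other than $1$ iff $\gcd(d,n)>1$. For $G$ with $m=2^{d+1}+2^d+1$, writing $a=\alpha$, $b=\alpha^{2^d}$, $c=\alpha^{2^{d+1}}$, Frobenius yields $(1+\alpha)^m=(1+a)(1+b)(1+c)$ and $\alpha^m=abc$, so the constant and triple-product terms cancel and
\[
1+\alpha^m+(1+\alpha)^m = (a+b+c)+(ab+ac+bc).
\]
For $\alpha\in\F_{2^{\gcd(d,n)}}$ one has $b=a$ and $c=a^2$, whence the right-hand side equals $\alpha^2+\alpha^2=0$; once more $\gcd(d,n)>1$ supplies an $\alpha\in\F_{2^{\gcd(d,n)}}\setminus\F_2$.

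The main obstacle is the algebraic bookkeeping for $G$: the trinomial exponent produces six cross-terms, and one must recognize them as the first two elementary symmetric polynomials in $\alpha,\alpha^{2^d},\alpha^{2^{d+1}}$ before the subfield condition collapses everything. The $F$-case and the trace cancellation are straightforward by comparison.
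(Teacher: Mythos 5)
Your proposal is correct and follows essentially the same route as the paper: it uses the witness $(1,\alpha)$ (the paper writes $z=\alpha x$ and then sets $x=1$), kills the trace term via $\Tr(\alpha+\alpha^2)=0$, chooses $\alpha\in\F_{2^{\gcd(d,n)}}\setminus\F_2$ so that the argument of $L$ vanishes, and defers the general claim to the reasoning of Theorem~\ref{xm-0APN}$(i)$, exactly as the paper does. Your symmetric-function bookkeeping for $G$ is just the expanded form of the paper's factorization $(\alpha+\alpha^{2^d})(1+\alpha^{2^d}+\alpha^{2^{d+1}})$, so the two arguments coincide in substance.
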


\begin{proof}
  The function $F$ is $0$-APN if and only if there are no solutions $x,y \in \F_{2^n}^*$, $x \ne y$ of the equation
\begin{align*}
0
&=F(x)+F(y)+F(x+y)\\
&=L\left(x^{2^d+1}+y^{2^d+1}+(x+y)^{2^d+1}\right)+\Tr\left(x^3+y^3+(x+y)^3\right)\\
&=L\left(x^{2^d}y+x\,y^{2^d}\right)+\Tr(x^2y+xy^2).
\end{align*}
Writing $y = \alpha x$, this is equivalent to the equation
\begin{align*}
&L\left(x^{2^d+1}(\alpha+\alpha^{2^d})\right)= \Tr\left(x^3(\alpha+\alpha^2)\right)
\end{align*}
having no solution for $\alpha \ne 0,1$.
Now, if  $m=\gcd(d,n)>1$, we take $\alpha\in\F_{2^m}\setminus\F_2\subseteq \F_{2^d}\cap  \F_{2^n}$. Then $\alpha^{2^d}+\alpha=0$, and for $x=1$ we have $\Tr(x^3(\alpha+\alpha^2))=0$, since it is known that  $\Tr(u)=0$ if and only if $u=b^2+b$ (in characteristic $2$), which renders nontrivial solutions to the above equation.  The first claim is shown.

We now concentrate on $G(x)$. Once again we want to show that the Rodier equation
\[ G(x) + G(y) + G(x+y) = 0 \]
has no solutions $x,y \in \F_{2^n}^*$ with $x \ne y$. Similarly to the case for $F$ above and writing $y = \alpha x$, we can easily see that this is equivalent to the equation
\begin{equation}
  L\left( x^{2^{d+1} + 2^d + 1} (\alpha + \alpha^{2^d})(1 + \alpha^{2^d} + \alpha^{2^{d+1}}) \right) = \Tr \left(x^3(\alpha + \alpha^2) \right)
  \label{eqRodier3}
\end{equation}
having no solutions with $\alpha \ne 0,1$.
%Thus,
%\allowdisplaybreaks
%\begin{align*}
%0
%&=G(x)+G(y)+G(x+y)\notag\\
%&=L\left(x^{2^{d+1}+2^d+1}+y^{2^{d+1}+2^d+1}+(x+y)^{2^{d+1}+2^d+1}\right)+\Tr\left(x^3+y^3+(x+y)^3\right)\notag\\
%&=L\left(x^{2^{d+1}+2^d}y+x\,y^{2^{d+1}+2^d}+x^{2^{d+1}+1}y^{2^d}+y^{2^{d+1}+1}x^{2^d}\right.\notag\\\notag &\qquad\left.+x^{2^{d+1}}y^{2^d+1}+y^{2^{d+1}}x^{2^d+1}\right)%\notag\\
%%&\qquad\qquad\qquad\qquad
%+\Tr(x^2y+xy^2), \notag\\
%\end{align*}
%which is equivalent to (using $y=\alpha x$),
%\begin{equation}
%\begin{split}
%&L\left(x^{2^{d+1}+2^d+1}\left(\alpha+\alpha^{2^d}+\alpha^{2^d+1}+\alpha^{2^{d+1}}+
%\alpha^{2^{d+1}+1}+\alpha^{2^{d+1}+2^d}\right)\right)\\
%&\qquad\qquad\qquad\qquad\qquad\qquad\qquad\qquad\qquad   = \Tr\left(x^3(\alpha+\alpha^2)\right),\label{eq:d+1}
%%&\qquad\qquad\qquad\qquad\qquad  \text{ using $y=\alpha x$}.\notag
%\end{split}
%\end{equation}
%(again we used Lucas' Theorem to find our odd binomial coefficients in the expansion of $(1+\alpha)^{2^{d+1}+2^d+1}$).
%Now, observe that
%\[
%\alpha+\alpha^{2^d}+\alpha^{2^d+1}+\alpha^{2^{d+1}}+\alpha^{2^{d+1}+1}+\alpha^{2^{d+1}+2^d}
%=\left( \alpha+\alpha^{2^d}\right)\left( 1+\alpha^{2^d}+\alpha^{2^{d+1}}\right).
%\]
So, denoting $m=\gcd(d,n)>1$, we can take $\alpha\in\F_{2^m}\setminus\F_2\subseteq \F_{2^d}\cap  \F_{2^n}$. Then we have $\alpha + \alpha^{2^d} = 0$ so that this $\alpha$ along with $x=1$ constitute a solution to \eqref{eqRodier3} implying that $G$ is not $0$-APN.

The  last claim can be argued as in the proof of Theorem~\ref{xm-0APN}$(i)$.
\end{proof}
\begin{remark}
	The condition on $d$ in the above theorem is important. Indeed, we have computationally checked that if $n=5$, then $x^9+\Tr(x^3)$ is $0$-APN, and potentially there may be some other cases.
\end{remark}

%We can certainly generalize even more these classes of functions to generate functions that are not $0$-APN, hence not APN, for infinitely many extensions of $\F_2$.
These classes of functions can be further generalized so as to encompass even more functions that are not $0$-APN.
\begin{theorem}
  Let $L_1$ and $L_2$ be linear functions over $\F_{2^n}$. If $\gcd(d,r,n) > 1$, then $L_1(x^{2^d+1})+L_2(x^{2^r+1})$ is not $0$-APN.

Furthermore, if $L_1$ is the identity and $L_2$ is the absolute trace, then $x^{2^d+1}+\Tr(x^{2^r+1})$ is not $0$-APN if $\gcd(d,n)>1$ and $\gcd(2^r+1,2^n-1)=1$, or $\gcd(d,r,n)>1$.

Finally, if $\gcd(d,s,n)>1$, then $L_1\left(x^{2^{d+1}+2^d+1}\right)+L_2\left(x^{2^{s+1}+2^s+1}\right)$ is not $0$-APN.
\end{theorem}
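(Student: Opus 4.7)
The plan is to mimic the strategy of the preceding theorem: for each of the three forms, I expand the Rodier sum $H(x)+H(y)+H(x+y)$, substitute $y=\alpha x$ with $\alpha\neq 0,1$, and then exhibit a choice of $(\alpha,x)$ with $x\neq 0$ that makes the expression vanish, thereby violating $0$-APN.

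For part (1), linearity of $L_1,L_2$ gives, as in the treatment of $F(x)$ above,
\[
 H(x)+H(y)+H(x+y)=L_1\!\left(x^{2^d+1}(\alpha+\alpha^{2^d})\right)+L_2\!\left(x^{2^r+1}(\alpha+\alpha^{2^r})\right).
\]
When $m:=\gcd(d,r,n)>1$, any $\alpha\in\F_{2^m}\setminus\F_2\subseteq\F_{2^d}\cap\F_{2^r}\cap\F_{2^n}$ satisfies $\alpha^{2^d}=\alpha$ and $\alpha^{2^r}=\alpha$, so both arguments of $L_1,L_2$ vanish for every $x$ (in particular $x=1$), killing the Rodier sum and showing that $H$ is not $0$-APN.

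For part (2) the same expansion yields
\[
H(x)+H(y)+H(x+y)= x^{2^d+1}(\alpha+\alpha^{2^d})+\Tr\!\left(x^{2^r+1}(\alpha+\alpha^{2^r})\right).
\]
If $\gcd(d,r,n)>1$ the argument of (1) applies verbatim. Otherwise, assume $\gcd(d,n)>1$ and $\gcd(2^r+1,2^n-1)=1$, and pick $\alpha\in\F_{2^{\gcd(d,n)}}\setminus\F_2$, which kills the first summand. It remains to find $x\neq 0$ making $\Tr(x^{2^r+1}c)=0$ where $c=\alpha+\alpha^{2^r}$. If $c=0$ any $x$ works; if $c\neq 0$, bijectivity of $x\mapsto x^{2^r+1}$ on $\F_{2^n}^*$ (provided by the coprimality hypothesis) combined with the fact that the kernel of the absolute trace has size $2^{n-1}$ guarantees at least $2^{n-1}-1>0$ nonzero choices of $x$ with the required trace condition.

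For part (3) the essential computation is the identity
\[
(x+y)^{2^{d+1}+2^d+1}+x^{2^{d+1}+2^d+1}+y^{2^{d+1}+2^d+1}=x^{2^{d+1}+2^d+1}(\alpha+\alpha^{2^d})\!\left(1+\alpha^{2^d}+\alpha^{2^{d+1}}\right),
\]
obtained after setting $y=\alpha x$ and using Frobenius to expand $(x+y)^{2^j}=x^{2^j}+y^{2^j}$ inside the product $(x+y)(x+y)^{2^d}(x+y)^{2^{d+1}}$; the six cross terms factor as the displayed product. Applying this twice (once for $d$, once for $s$), linearity gives
\[
 H(x)+H(y)+H(x+y)=L_1(\cdots (\alpha+\alpha^{2^d})\cdots)+L_2(\cdots (\alpha+\alpha^{2^s})\cdots).
\]
Under $\gcd(d,s,n)>1$, choose $\alpha\in\F_{2^{\gcd(d,s,n)}}\setminus\F_2$; then $\alpha+\alpha^{2^d}=\alpha+\alpha^{2^s}=0$ and both $L_i$-arguments vanish for every $x$, so $(x,y)=(1,\alpha)$ solves Rodier's equation nontrivially.

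The only step requiring real care is the mixed case of part (2): the verification that the second summand can be forced to vanish by choice of $x$ after $\alpha$ has been fixed to kill the first. The expansion in part (3) looks formidable at first but is routine once one uses the Frobenius-linearity trick inside the triple product. All other pieces reduce to picking $\alpha$ in the appropriate intermediate subfield.
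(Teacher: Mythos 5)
Your proposal is correct and follows essentially the same route as the paper: substitute $y=\alpha x$, factor out the subfield expressions $\alpha+\alpha^{2^d}$ (and $(\alpha+\alpha^{2^d})(1+\alpha^{2^d}+\alpha^{2^{d+1}})$ for the third family), and choose $\alpha$ in $\F_{2^{\gcd(\cdot)}}\setminus\F_2$ to kill the relevant terms. The only cosmetic difference is in the mixed identity/trace case, where you count nonzero trace-zero elements via bijectivity of $x\mapsto x^{2^r+1}$, while the paper parametrizes trace-zero elements as $\beta^2+\beta$ and extracts a $(2^r+1)$-st root; both hinge on the same hypothesis $\gcd(2^r+1,2^n-1)=1$.
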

\begin{proof}
We consider first the function $L_1(x^{2^d+1})+L_2(x^{2^r+1})$.
As before, we investigate the solvability of the equation
\begin{align}
\label{eq:L12}
L_1\left(x^{2^d+1}(\alpha+\alpha^{2^d})\right)=L_2\left(x^{2^r+1}(\alpha+\alpha^{2^r})\right),
\end{align}
where $y = \alpha x$ for $x \ne 0$ and $\alpha \ne 0,1$.
Denoting $m=\gcd(d,r,n)>1$, we can take $\alpha\in\F_{2^m} \setminus\F_2\subseteq \F_{2^d}\cap \F_{2^r}$. Then $\alpha^{2^d}+\alpha=\alpha^{2^r}+\alpha=0$, so that~\eqref{eq:L12} has nontrivial solutions and thus the considered function is not $0$-APN.

In the particular case when $L_1$ is the identity and $L_2$ is the trace function, it is sufficient to show that the function
$x^{2^d+1}+\Tr(x^{2^r+1})$ is not $0$-APN if $\gcd(d,n)>1$ and $\gcd(2^r+1,2^n-1)=1$ since the other case follows from the previously proven statement. The relevant Rodier equation is
\[
x^{2^d+1}(\alpha+\alpha^{2^d}) =\Tr\left(x^{2^r+1}(\alpha+\alpha^{2^r})\right).
\]
Denoting $m=\gcd(d,n)>1$, we can find $\alpha\in\F_{2^m} \setminus\F_2$ for which the left hand side vanishes. Now we argue that regardless of the value of $\alpha$, there exists an element $x$ such that $x^{2^r+1}(\alpha+\alpha^{2^r})=\beta^2+\beta$ for some $\beta$. If $\alpha+\alpha^{2^r}=0$, we are done since $x$ can take any value. If $\alpha+\alpha^{2^r}\neq 0$, taking $\beta=\alpha+\alpha^{2^r}$, if $\beta+1\neq 0$, or any other nonzero element $\beta$ of the finite field such that $\beta+1\neq 0$, the above claim  is implied by the existence of solutions $x$ such that $x^{2^r+1}=\frac{\beta^2+\beta}{\alpha+\alpha^{2^r}}$. This in turn follows from the fact that $\gcd(2^r + 1, 2^n-1) = 1$ and thus every element of $\F_{2^n}$ has a $2^r+1$-st root (see e.g.~\cite{ln}).
\\
To show the last claim, we again examine the relevant Rodier equation which in this case (by applying the same approach as above) takes the form
\begin{align*}
L_1\left(\left( \alpha+\alpha^{2^d}\right)\left( 1+\alpha^{2^d}+\alpha^{2^{d+1}}\right)\right)=
L_2\left(\left( \alpha+\alpha^{2^s}\right)\left( 1+\alpha^{2^s}+\alpha^{2^{s+1}}\right)\right).
\end{align*}
Denoting $m=\gcd(d,s,n)>1$, we can find $\alpha\in\F_{2^{m}}\setminus \F_2$, so that , $\alpha+\alpha^{2^d}=\alpha+\alpha^{2^s}=0$. The Rodier equation thus has nontrivial solutions and the function in question is not $0$-APN.
\end{proof}

Recall the following result (obtained using a combination of theoretical and computational arguments) of Leander and Rodier~\cite{LR11}.
\begin{theorem}[Leander-Rodier, 2011]
\label{thm:lr11}
If $n\geq 2$ and $d$ is a nonzero integer which is not a power of $2$, then the function
\[
F(x)=x^{2^n-2}+\beta\, x^d
\]
over $\F_{2^n}$ is not APN for $d\leq 29$ and any $\beta \in \F_{2^n}^*$.
\end{theorem}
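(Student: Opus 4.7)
The plan is to apply Rodier's characterization of APN functions (Lemma~\ref{APN:char1}(iii)) and reduce the question to producing nontrivial points on an explicit algebraic variety attached to $F$. Writing $F(x) = x^{2^n-2} + \beta x^d$, the equation $F(x)+F(y)+F(z)+F(x+y+z)=0$ splits as
\begin{align*}
\bigl(x^{-1}+y^{-1}+z^{-1}+(x+y+z)^{-1}\bigr) \;+\; \beta\bigl(x^d+y^d+z^d+(x+y+z)^d\bigr) \;=\; 0
\end{align*}
(wherever all four arguments of the inverse are nonzero). After clearing denominators by $xyz(x+y+z)$, this becomes a single polynomial identity $P_{d,\beta}(x,y,z)=0$ in three variables, and $F$ fails to be APN exactly when $P_{d,\beta}$ has a zero outside the ``trivial'' locus $(x+y)(x+z)(y+z)=0$ (together with the extra boundary cases where one of $x,y,z,x+y+z$ equals $0$, which have to be handled separately).

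Next I would exploit homogeneity to cut down the dimension. Substituting $y=\alpha x$ and $z=\gamma x$ with $x\neq 0$ reduces $P_{d,\beta}=0$ to a two-variable equation $Q_{d,\beta}(\alpha,\gamma) \cdot x^{e(d)}=0$ for an explicit exponent $e(d)$, and the trivial locus becomes $(1+\alpha)(1+\gamma)(\alpha+\gamma)=0$. Once the factor $(1+\alpha)(1+\gamma)(\alpha+\gamma)$ (and any obvious factors like $\alpha\gamma$) is divided out of $Q_{d,\beta}$, one is left with a reduced affine plane curve $C_{d,\beta}$ whose $\F_{2^n}$-rational points outside these factors give precisely the desired non-APN witnesses. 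The key qualitative input is that, for $d$ not a power of $2$, the symmetric polynomial $x^d+y^d+z^d+(x+y+z)^d$ is \emph{not} identically zero (since in that exceptional case it is the Frobenius-linearization of $x+y+z+(x+y+z)$), so $C_{d,\beta}$ is a genuine curve of positive degree.

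The main step is then to show that $C_{d,\beta}$ carries an $\F_{2^n}$-rational point off the excluded divisors for every $\beta\in\F_{2^n}^*$. My plan would be to proceed in two regimes. For very small $d$ (say $d\in\{3,5,6,7,9\}$), one can produce such a point by an explicit parameterization: pick a suitable $\alpha\in\F_{2^n}\setminus\F_2$ in a small subfield so that $\alpha^d+\alpha$, $(1+\alpha)^d+(1+\alpha)$, etc., simplify, and then solve the one-variable equation that remains for $x$ (much as was done in the two earlier theorems of Section~\ref{sec5}). For the remaining $d\leq 29$, one argues that $C_{d,\beta}$ (or one of its components) is absolutely irreducible over $\overline{\F_2}$, so that the Weil--Serre bound
\begin{align*}
\bigl|\,|C_{d,\beta}(\F_{2^n})| - 2^n - 1 \,\bigr| \;\leq\; 2g(d)\,\sqrt{2^n}
\end{align*}
forces $C_{d,\beta}(\F_{2^n})$ to exceed the bounded number of points lying on the trivial divisors once $n$ is large.

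The hardest part will be the absolute irreducibility verification, since it has to be checked for each of the finitely many exponents $d\leq 29$ that are not powers of $2$ and for every choice of $\beta$; this is presumably the reason for the numerical bound $d\leq 29$ in the statement. A secondary difficulty is handling the boundary cases where some of $x,y,z,x+y+z$ or of the resulting $\alpha,\gamma$ coordinates vanish: these have to be excluded or shown to themselves supply a non-APN witness. Once absolute irreducibility of a suitable component of $C_{d,\beta}$ is established in each case, the Weil bound finishes the argument uniformly in $\beta$ for all sufficiently large $n$, and a direct inspection finishes any small exceptional $n$.
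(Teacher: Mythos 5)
First, a point of orientation: the paper does not prove Theorem~\ref{thm:lr11} at all --- it is imported from Leander and Rodier~\cite{LR11} as a known result obtained ``using a combination of theoretical and computational arguments'', and the paper's own contribution in this direction is the corollary that follows it (odd $n$, $\gcd(d+1,2^n-1)=1$), proved by the completely elementary solvability argument of the preceding theorem on binomials $x^a+\beta x^b$. So your proposal has to be judged as an attempted reconstruction of the Leander--Rodier proof, and as such it has two genuine gaps.

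The first is a concrete technical error: the dimension reduction by homogeneity fails for $F(x)=x^{2^n-2}+\beta x^d$, because this binomial is not homogeneous. Substituting $y=\alpha x$, $z=\gamma x$ into the (cleared) Rodier equation does not produce $Q_{d,\beta}(\alpha,\gamma)\,x^{e(d)}=0$: the inverse part contributes $x^{-1}A(\alpha,\gamma)$ while the degree-$d$ part contributes $\beta x^{d}B(\alpha,\gamma)$, so after multiplying by $x$ you are left with $A(\alpha,\gamma)+\beta x^{d+1}B(\alpha,\gamma)=0$ --- a surface in the three variables $(x,\alpha,\gamma)$, equivalently the condition that $A(\alpha,\gamma)/\bigl(\beta B(\alpha,\gamma)\bigr)$ be a $(d+1)$-st power in $\F_{2^n}$. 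This is exactly why the paper's own corollary needs the hypothesis $\gcd(d+1,2^n-1)=1$, and why Leander--Rodier work with absolute irreducibility of surfaces (and Lang--Weil/Ghorpade--Lachaud-type point counts) rather than with a plane curve and the Weil--Serre bound as you propose. The second gap is that the two steps carrying essentially all of the content are only announced, not carried out: the absolute irreducibility of (a component of) the relevant variety for each non-power-of-$2$ exponent $d\le 29$ and every $\beta\in\F_{2^n}^*$, and the treatment of the small $n$ for which the point-count bound gives nothing --- the theorem claims all $n\ge 2$, whereas any Weil-type estimate only bites once $n$ is large relative to $d$ and the degree data. Those are precisely the parts that required the nontrivial theoretical and computational work in~\cite{LR11}; as it stands your text is a roadmap toward that argument rather than a proof.
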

Below we find more classes of functions that are not $0$-APN for infinitely many extensions $\F_{2^n}$. In the process, we extend the  previous result of Leander and Rodier.
\begin{theorem}
Let $a>b$ be positive integers. Assuming that one of $x^a$ and $x^b$ are $0$-APN on $\F_{2^n}$ and $\gcd(a-b,
2^n-1)=1$, the polynomial $x^a+\beta\,x^b$ is not $0$-APN for any $\beta\in\F_{2^n}^*$. Let $c>d$ be positive integers. In particular,
\begin{enumerate}[(i)]
\item if
$\gcd(c-1,n)=\gcd(c-d,n)=1$, or $\gcd(d-1,n)=\gcd(c-d,n)=1$, then the polynomial
$x^{2^c-1}+\beta\,x^{2^d-1}$ is not $0$-APN;
\item if $\gcd(c,n)=\gcd(c-d,n)=1$, or $\gcd(d,n)=\gcd(c-d,n)=1$, then the polynomial
$x^{2^c+1}+\beta\,x^{2^d+1}$ is not $0$-APN;
\item if $\gcd(c,n)=\gcd(2^{c-1}-2^{d-1}+1,2^n-1)=1$, or $\gcd(d-1,n)=\gcd(2^{c-1}-2^{d-1}+1,2^n-1)=1$, then the polynomial $x^{2^c+1}+\beta\,x^{2^d-1}$ is not $0$-APN;
\item if $\gcd(c-1,n)=\gcd(2^{c-1}-2^{d-1}-1,2^n-1)=1$, or $\gcd(d,n)=\gcd(2^{c-1}-2^{d-1}-1,2^n-1)=1$, then the
polynomial $x^{2^c-1}+\beta\,x^{2^d+1}$ is not $0$-APN.
\end{enumerate}
\end{theorem}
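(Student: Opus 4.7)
The plan is to make the Rodier identity for $F(x) = x^a + \beta x^b$ explicit by substituting $y = \alpha x$ with $x \in \F_{2^n}^*$ and $\alpha \in \F_{2^n} \setminus \{0,1\}$, under which the sum factors as
\[
F(x) + F(y) + F(x+y) = x^a\, P_a(\alpha) + \beta\, x^b\, P_b(\alpha),
\]
where $P_m(\alpha) := 1 + \alpha^m + (1+\alpha)^m$. As seen in the proof of Theorem~\ref{xm-0APN}$(i)$, the monomial $x^m$ is $0$-APN on $\F_{2^n}$ if and only if $P_m$ has no zero in $\F_{2^n} \setminus \{0,1\}$. So the task reduces to producing, for $F = x^a + \beta x^b$, a nontrivial zero of the right-hand side.

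Assuming without loss of generality that $x^a$ is the $0$-APN monomial, I have $P_a(\alpha) \ne 0$ for every $\alpha \in \F_{2^n} \setminus \{0,1\}$. I then pick any such $\alpha$ with $P_b(\alpha) \ne 0$; in each of the subcases $(i)$--$(iv)$ the exponent $b$ is of the form $2^d \pm 1$ and is not congruent to a power of $2$ modulo $2^n - 1$, so $P_b$ is not the zero function on $\F_{2^n}$ and such $\alpha$ exist. The Rodier equation then rearranges to
\[
x^{a-b} = \beta\, \frac{P_b(\alpha)}{P_a(\alpha)} \in \F_{2^n}^*,
\]
and the hypothesis $\gcd(a-b,2^n-1) = 1$ makes $t \mapsto t^{a-b}$ a permutation of $\F_{2^n}^*$, producing a unique $x \in \F_{2^n}^*$. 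The triple $x$, $\alpha x$, $(1+\alpha)x$ consists of three distinct nonzero elements witnessing that $F$ is not $0$-APN.

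Items $(i)$--$(iv)$ follow by specialization. Writing $a - b = 2^d(2^{c-d}-1)$ in cases $(i)$ and $(ii)$, and $a - b = 2(2^{c-1} - 2^{d-1} \pm 1)$ in cases $(iii)$ and $(iv)$, and using the standard identity $\gcd(2^s-1,2^n-1) = 2^{\gcd(s,n)}-1$, the condition $\gcd(a-b,2^n-1)=1$ translates to the stated gcd hypothesis in each case. Theorem~\ref{xm-0APN}$(ii)$--$(iii)$ supplies the $0$-APN characterizations of $x^{2^c \pm 1}$ and $x^{2^d \pm 1}$ in terms of $\gcd(c,n)$ and $\gcd(c-1,n)$ (and their $d$-analogues), and the two alternative sets of conditions in each item correspond to whether $x^a$ or $x^b$ is taken as the $0$-APN monomial in the main argument. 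The main (and essentially only) technical checkpoint is the existence of some $\alpha \in \F_{2^n} \setminus \{0,1\}$ with $P_b(\alpha) \ne 0$: it fails precisely when $x^b$ is $\F_2$-linear on $\F_{2^n}$, a degenerate case excluded throughout $(i)$--$(iv)$ as soon as the parameters produce a genuinely nonlinear $x^b$.
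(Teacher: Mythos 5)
Your proof is correct and follows essentially the same route as the paper's: substitute $y=\alpha x$, use homogeneity of the Rodier expression to reduce to $x^{a-b}=\beta\,P_b(\alpha)/P_a(\alpha)$ with $P_a(\alpha)\neq 0$ guaranteed by the $0$-APNness of $x^a$ (or symmetrically $x^b$), invoke $\gcd(a-b,2^n-1)=1$ to solve for $x$, and then specialize to items $(i)$--$(iv)$ via $a-b=2^d(2^{c-d}-1)$ or $a-b=2(2^{c-1}-2^{d-1}\pm1)$ and the identity $\gcd(2^s-1,2^n-1)=2^{\gcd(s,n)}-1$, exactly as the paper does. The only difference is that you explicitly flag the need for some $\alpha\neq 0,1$ with $P_b(\alpha)\neq 0$, a point the paper's proof passes over silently; note, however, that your blanket assertion that $b=2^d\pm1$ is never a power of $2$ modulo $2^n-1$ is not quite accurate (e.g.\ $d=1$ gives $b=1$, and $d\equiv 0\pmod n$ gives $2^d+1\equiv 2$), so only your closing hedge about excluding the linear (degenerate) case actually covers this, the same implicit restriction under which the paper's own argument operates.
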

\begin{proof}
Let $F(x)=x^a+\beta\,x^b \,\, (a>b)$. Then $F$ is $0$-APN if and only if
$0=F(y)+F(z)+F(y+z)$ has no solutions $y,z$ with $yz(y+z)\neq 0$.
The relevant Rodier equation takes the form
\begin{align*}
  0&=F(y)+F(z)+F(y+z)=y^a+\beta\,y^b+z^a+\beta z^b+(y+z)^a+\beta(y+z)^b,
\end{align*}
which, with $y = z\alpha$ with $\alpha \ne 0,1$, becomes
\begin{align*}
 0 &=z^a\left(\alpha^a+1+(\alpha+1)^a\right)+ \beta z^b\left(\alpha^b+1+(\alpha+1)^b \right).
\end{align*}
Note that the polynomial $x^m$ is $0$-APN if and only
$x^m+1+(x+1)^m$ has no root $x\neq 0,1$, and such $m$ can
be classified by Theorem~\ref{xm-0APN}\,$(i)$. Assume that at least one of $x^a$
and $x^b$ are $0$-APN.
Then one can always find
$\alpha \in \F_{2^n}$ such that
$$
\alpha^a+1+(\alpha+1)^a\neq 0\neq \alpha^b+1+(\alpha+1)^b.
$$
For example, when $x^a$ is $0$-APN, one can choose any $\alpha\neq
0,1$ outside the roots of $x^b+1+(x+1)^b=0$. Therefore one has
$$
z^{a-b}=\beta\frac{\alpha^b+1+(\alpha+1)^b}{\alpha^a+1+(\alpha+1)^a}.
$$
When $\gcd(a-b,2^n-1)=1$, the above equation always has a unique
solution $z$ for any $\alpha\neq 0,1$, and one has $y =z\alpha\neq z$, since $\alpha\neq 1$.

We now show the other claims.
When $a=2^c-1$ and $b=2^d-1$, with $\gcd(c-1,n)=1$ or $\gcd(d-1,n)$,  then by Theorem \ref{xm-0APN}, one of $x^a$ or $x^b$ is 0-APN. One
has $a-b=2^d(2^{c-d}-1)$ and
$\gcd(a-b,2^n-1)=\gcd(2^{c-d}-1,2^n-1)=2^{\gcd(c-d,n)}-1$, which
becomes one if and only if $\gcd(c-d,n)=1$. Therefore, when
$\gcd(c-d,n) = 1$ the polynomial $x^{2^c-1}+\beta\,x^{2^d-1}$ is not $0$-APN by the first part of the proof.

%When $a=2^c-1$ and $b=2^d-1$ with   $\gcd(c-1,n)=1$ or
%$\gcd(d-1,n)=1$, one has $a-b=2^d(2^{c-d}-1)$ and
%$\gcd(a-b,2^n-1)=\gcd(2^{c-d}-1,2^n-1)=2^{\gcd(c-d,n)}-1$ which
%becomes one if and only $\gcd(c-d,n)=1$.  Therefore, when
%$\gcd(c-1,n)=1=\gcd(c-d,n)$ or $\gcd(d-1,n)=1=\gcd(c-d,n)$, then
%the polynomial $x^{2^c-1}+x^{2^d-1}$ is not $0$-APN.

When $a=2^c+1$ and $b=2^d+1$ with   $\gcd(c,n)=1$ or
$\gcd(d,n)=1$,  then by Theorem~\ref{xm-0APN}, one of $x^a$ or $x^b$ is 0-APN. One
has $a-b=2^d(2^{c-d}-1)$ and
$\gcd(a-b,2^n-1)=\gcd(2^{c-d}-1,2^n-1)=2^{\gcd(c-d,n)}-1$ which
becomes one if and only if $\gcd(c-d,n)=1$.  Therefore, when $\gcd(c-d,n)=1$, the polynomial $x^{2^c+1}+\beta\,x^{2^d+1}$ is not $0$-APN.

When $a=2^c+1$ and $b=2^d-1$ with $\gcd(c,n)=1$ or
$\gcd(d-1,n)=1$,  then by Theorem \ref{xm-0APN}, one of $x^a$ or $x^b$ is 0-APN. One has $a-b=2^c-2^d+2$ and
$\gcd(a-b,2^n-1)=\gcd(2^{c-1}-2^{d-1}+1,2^n-1)$.   Therefore,
when $\gcd(2^{c-1} - 2^{d-1} + 1, 2^n-1) = 1$, the polynomial $x^{2^c+1}+\beta\,x^{2^d-1}$ is not $0$-APN.

Lastly, when $a=2^c-1$ and $b=2^d+1$ with   $\gcd(c-1,n)=1$ or
$\gcd(d,n)=1$,  then by Theorem \ref{xm-0APN}, one of $x^a$ or $x^b$ is 0-APN. One has $a-b=2^c-2^d-2$ and
$\gcd(a-b,2^n-1)=\gcd(2^{c-1}-2^{d-1}-1,2^n-1)$.  Therefore,
when $\gcd(2^{c-1} - 2^{d-1} - 1, 2^n-1) = 1$ the polynomial $x^{2^c-1}+\beta\,x^{2^d+1}$ is not $0$-APN.
\end{proof}
From the above examples, one can find many binomials which are
not $0$-APN for infinitely many extensions of the prime field $\F_2$. For
example, both $x^7+x^3$ and $x^5+x^3$ are not $0$-APN for all
finite fields $\F_{2^n}$ when $n>2$.
We can easily generalize (for any odd $n$)   Leander and Rodier's result of Theorem~\ref{thm:lr11}~\cite{LR11} in our next corollary.
\begin{corollary}
  Assume that $n$ is odd and $d$ is a positive integer with $\gcd(d+1,2^n-1)=1$. Then $x^{2^n-2}+\beta\, x^d$ is not $0$-APN for any $\beta \in \F_{2^n}^*$.
\end{corollary}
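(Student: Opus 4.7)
The plan is to invoke the main theorem of this section directly with $a = 2^n - 2$ and $b = d$, so the task reduces to checking two hypotheses: that $x^{2^n-2}$ is $0$-APN on $\F_{2^n}$, and that $\gcd(a - b, 2^n-1) = 1$.

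First I would observe that one may assume $1 \leq d \leq 2^n - 3$ after reducing $d$ modulo $2^n - 1$, since only the residue affects the function $x^d$ on $\F_{2^n}$, and the hypothesis $\gcd(d+1, 2^n - 1) = 1$ is invariant under this reduction. Moreover, the residue $d \equiv 2^n - 2 \pmod{2^n-1}$ is excluded by the same hypothesis, since it would force $\gcd(d+1, 2^n-1) = 2^n-1 > 1$ for $n \ge 2$. This guarantees $a > b$, as required by the theorem.

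Second, to verify that $x^{2^n-2}$ is $0$-APN, I would note that it is the multiplicative inverse function on $\F_{2^n}^*$ (extended by $0 \mapsto 0$), which is APN on $\F_{2^n}$ for $n$ odd by Nyberg's classical result, and hence certainly $0$-APN. Alternatively, this can be extracted from Theorem~\ref{xm-0APN}(ii) applied to $x^{2^{n-1}-1}$, which is $0$-APN iff $\gcd(n-2, n) = \gcd(2, n) = 1$, i.e.\ iff $n$ is odd, combined with the fact that squaring is a bijection on $\F_{2^n}$ and so $x^{2(2^{n-1}-1)} = x^{2^n-2}$ inherits the property.

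Third, a short congruence computation gives
\[
a - b = 2^n - 2 - d \equiv -(d+1) \pmod{2^n - 1},
\]
hence $\gcd(a - b, 2^n - 1) = \gcd(d+1, 2^n - 1) = 1$ by assumption. With both hypotheses confirmed, the cited theorem immediately yields that $x^{2^n-2} + \beta\, x^d$ is not $0$-APN for any $\beta \in \F_{2^n}^*$. I do not expect a real obstacle: the corollary is essentially a direct specialization of the main theorem, the only delicate points being the trivial reduction of $d$ modulo $2^n - 1$ and the identification of $x^{2^n-2}$ with Nyberg's inverse function in odd dimension.
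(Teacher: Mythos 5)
Your proof is correct and takes essentially the same route as the paper: invoke the preceding theorem with $a=2^n-2$, $b=d$, using that $x^{2^n-2}$ (the inverse function) is APN, hence $0$-APN, for odd $n$, and that $\gcd(2^n-2-d,\,2^n-1)=\gcd(d+1,\,2^n-1)=1$. The extra care you take in reducing $d$ modulo $2^n-1$ so that $a>b$ holds is a detail the paper glosses over, but otherwise the two arguments coincide.
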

\begin{proof}
Observe that $x^{2^n-2}$ is APN for $n$ odd. By the previous theorem $x^{2^n-2}+\beta\,x^d$ is not APN if $1=\gcd(2^n-2-d, 2^n-1)=\gcd(2^n-1,d+1)$ and the proof is done.
\end{proof}

\section{Conclusion and further comments}\label{sec6}

In this paper we introduce a partial APN (pAPN) concept, which may help in understanding the APN property and its properties. We certainly just scratched the surface in the investigation of the pAPN notion and there are certainly many more questions one could ask. For example, we propose further constructions of large classes of such pAPN functions, as well as perhaps look into the construction of permutation pAPN, which may shed light into the well known and quite difficult problem of the permutation APN problem.

\vskip.5cm

\noindent {\bf Acknowledgements.} 
The authors would like to thank the referees for their thorough reading and useful comments, and the editors for handling our manuscript very efficiently. The paper was started while the fourth named author visited Selmer center at UiB in the Summer of 2018. This author thanks the institution for the excellent working conditions. S.K. was supported by the National Research Foundation of Korea (NRF) grant funded by the Korea government (MSIP) (No. 2016R1D1A1B03931912 and No. 2016R1A5A1008055). The research of the first two authors was supported by Trond Mohn foundation.

\end{document}